\newtheorem{myDef}{Definition}
\newtheorem{theorem}{Theorem}
\newlength\savewidth
\begin{document}
%
\title{PIVODL: Privacy-preserving vertical federated learning over distributed labels}
%
%
%

\author{Hangyu~Zhu,
        Rui~Wang,
        Yaochu~Jin,~\IEEEmembership{Fellow,~IEEE},
        and~Kaitai~Liang,~\IEEEmembership{Member,~IEEE}
\thanks{Manuscript received xxx, 2021; revised xxx, 2021. (\textit{Corresponding authors: Yaochu Jin})}
\thanks{H. Zhu and Y. Jin are with the Department of Computer Science, University of Surrey, Guildford, Surrey GU2 7XH, UK. E-mail:\{hangyu.zhu; yaochu.jin\}@surrey.ac.uk.}
\thanks{R. Wang and K. Liang are with the Department of Intelligent Systems, Delft University of Technology, Delft 2628XE, the Netherlands. E-mail:\{R.Wang-8; Kaitai.Liang\}@tudelft.nl}}

%
%

\markboth{Journal of \LaTeX\ Class Files,~Vol.~xx, No.xx, August~2021}%
{Shell \MakeLowercase{\textit{et al.}}: Bare Demo of IEEEtran.cls for IEEE Journals}
%



\maketitle

\begin{abstract}
Federated learning (FL) is an emerging privacy preserving machine learning protocol that allows multiple devices to collaboratively train a shared global model without revealing their private local data. Non-parametric models like gradient boosting decision trees (GBDT) have been commonly used in FL for vertically partitioned data. However, all these studies assume that all the data labels are stored on only one client, which may be unrealistic for real-world applications. Therefore, in this work, we propose a secure vertical FL framework, named PIVODL, to train GBDT with data labels distributed on multiple devices. Both homomorphic encryption and differential privacy are adopted to prevent label information from being leaked through transmitted gradients and leaf values. Our experimental results show that both information leakage and model performance degradation of the proposed PIVODL are negligible.
\end{abstract}

\begin{IEEEkeywords}
Vertical federated learning, gradient boosting decision tree, privacy preservation, encryption
\end{IEEEkeywords}

%
\IEEEpeerreviewmaketitle

\section{Introduction}
%
%
%
%

\IEEEPARstart{D}{ata} privacy has become the main focus of attention in modern societies and the recently enacted General Data Protection Regulation (GDPR) prohibits users from wantonly sharing and exchanging their personal data. This may be a big barrier to model training, since standard centralized machine learning algorithms require to collect and store training data on one single cloud server. To tackle this issue, federated learning (FL) \cite{mcmahan2017communication} is proposed to enable multiple edge devices to collaboratively train a shared global model while keeping all the users' data on local devices. 

FL can be categorized into horizontal federated learning (HFL) and vertical federated learning (VFL) based on how data is partitioned \cite{10.1145/3298981}. HFL or instance-based FL represents the scenarios in which the users' training data share the same feature space but have different samples. A large amount of research work \cite{8241854,10.1145/3338501.3357370,10.1145/3338501.3357371,254465,zhu2020distributed} are dedicated on enhancing the security level of HFL, since recent studies \cite{10.1145/2810103.2813687,orekondy18gradient,NEURIPS2020_c4ede56b,li2019end} have shown that HFL protocol still suffer from potential risks of leaking local private data information. Secure multi-party  computation \cite{goldreich1998secure, 10.1145/3133956.3133982}, homomorphic encryption (HE) \cite{gentry2009fully} and differential privacy (DP) \cite{10.1007/978-3-540-79228-4_1} are three most common privacy preserving mechanisms that are theoretically and empirically proved to be effective for HFL. 

Compared to HFL, VFL is more likely to happen in the real-world applications. And the training data of participating clients in VFL have the same sample ID space but with different feature space. Privacy preservation is also a critical concern in VFL. Hardy \emph{et al.} \cite{hardy2017private} introduce a secure identity (ID) alignment framework to protect the data ID information in vertical federated logistic regression. Meanwhile, Nock \emph{et al.} \cite{nock2018entity} give a comprehensive discussion of the impact of ID entity resolution in VFL. In addition, Liu \emph{et al.} \cite{liu2020asymmetrically} point out the privacy concern of ID alignment in asymmetrical VFL. Yang \emph{et al.} introduce a simplified two-party vertical FL framework \cite{yang2019parallel} by removing the third party coordinator. Different from the aforementioned work for training \emph{parametric} models in VFL, Cheng \emph{et al.} \cite{cheng2021secureboost} first propose a secure XGBoost \cite{10.1145/2939672.2939785} decision tree system named SecureBoost in a setting of vertically partitioned data with the help of homomorphic encryption. Based on this work, Wu \emph{et al.} introduce a novel approach called Pivot \cite{10.14778/3407790.3407811} to ensure that the intermediate information is not disclosed during training. Tian \emph{et al.}\cite{tian2020federboost} design a Federboost scheme to train a GBDT over both HFL and VFL and vertical Federboost can satisfy the security requirements without any encryption operations.

However, current privacy preserving VFL systems are built under the assumption that all the data labels are stored only on one \emph{guest} or \emph{active} party, which is not realistic in many real-world applications. A typical VFL scenarios comes from two financial agents in the same region and these two agents provide different services but may have many common customers (residents in this region). It is more common that each agent owns parts of data labels like personal credit ratings, instead of only one agent contains all the label information. 

Therefore in this paper, we consider a more realistic setting of training XGBoost decision tree models in VFL in which each participating client holds parts of data labels that cannot be shared and exchanged with others during the training process. Compared to the standard vertical boosting tree system, constructing a secure vertical federated XGBoost system over distributed labels has the following challenges:
\begin{enumerate}
    \item Protocols like SecureBoost can perform gradient and Hessian summation on the guest client given the split for the tree nodes. However, in our assumption, data labels are distributed across connected clients and the summation cannot be directly operated if some labels of data samples within the split branch are missing. 
    
    \item XGBoost often needs to traverse all possible features to find the best split with the highest gain score. If each client is responsible for both splitting the data instances and computing the gain scores, the gradients and Hessian of data samples (contain label information) may run a high risk of being leaked. Even of homomorphic encryption is adopted in the learning system, the gradients can still be tracked by the differential attack (the difference between summations of gradient and Hessian for two adjacent feature splits).
    
    \item Training boosting trees require the intermediate leaf weight values to sequentially update the label predictions. These predictions can only be kept on guest clients to prevent other \emph{host} clients that do not contain any labels from deducing the label information. However, if the labels of data samples on split leaf nodes come from multiple clients, any client related to this node may track and guess the real labels from other clients through the leaf weight.
\end{enumerate}

To tackle the above challenges, we propose a novel privacy-preserving vertical federated learning system over distributed labels (PIVODL). PIVODL allows multiple clients to jointly construct a XGBoost tree model without disclosing feature or label information of the training data, given that the data labels are distributed across each of the participating clients in VFL. Specifically, additive HE and partial DP are adopted in the PIVODL framework to make sure that private label information will not be revealed or deduced during node split and prediction update. The contributions of the present work can be summarized as follows:
\begin{enumerate}
    \item We are the first to consider training XGBoost decision trees in VFL, in which the data labels are distributed over multiple data owners. The potential risk of privacy leakage under this condition is discussed in detail.
    \item A novel secure protocol is proposed by setting \emph{source} clients and \emph{split} clients for node split. A source client contains all information about split features, binning thresholds and split data indices, but is not used to calculate the gradients and Hessian sum of split branches if some branch data labels are unavailable. By contrast, a split client is used to compute impurity gains and leaf weights, but has no idea about the features and data information. By combining the source clients with the split clients during tree node split in VFL, we can effectively defend differential attacks and prevent intermediate gradients and Hessian values from being leaked.
    \item The calculated leaf node weights always need to be sent to other clients for label prediction updates. However as mentioned before, the source clients own all the split data indices and they know parts of prediction update for data instances of other clients. Therefore, an extra partial DP scheme is introduced by adding Gaussian noise to the leaf weights before sending them to the source clients.
    \item Empirical experiments are performed to compare the training time and model performance for both classification and regression tasks. In addition, we conduct an attack inference to deduce the data labels through intermediate information. Our results confirm that the proposed PIVODL system can effectively protect users' data privacy with negligible model performance degradation.
\end{enumerate}

\section{PRELIMINARIES}

\subsection{Vertical federated learning}
Different HFL \cite{yang2019federated}, where each client owns all features of the training set but differs in data samples, VFL \cite{liu2019communication} mainly focuses on the scenario where features are distributed among different clients. We denote $\mathcal{X}$ as the feature space, $\mathcal{Y}$ as the label space and $\mathcal{I}$ as the data IDs, the standard VFL can be defined as:

\begin{myDef}
VFL: Given a training set with $m$ data points distributed across $n$ clients, each client $j$ has the data feature $\mathcal{X}^{j} = \{X_{1}^{j}, \cdots, X_{m}^{j} \}$, labels $\mathcal{Y}^{j} = \{ y_{1}^{j}, \cdots, y_{m}^{j} \}$, and sample ids $\mathcal{I}^{j} =\{ I_{1}^{j}, \cdots, I_{m}^{j} \} $ where $j \in \{1,\cdots,n\}$. For any two different clients $j, j^{'}$, they satisfy:
\begin{equation}
\mathcal{X}^{j} \neq \mathcal{X}^{j^{'}}, \mathcal{Y}^{j} = \mathcal{Y}^{j^{'}}, \mathcal{I}^{j} = \mathcal{I}^{j^{'}}, j \neq j^{'}
\end{equation}
\end{myDef}

It can be seen that each connected client $j$ in VFL shares the same data sample ids $\mathcal{I}^{j}$ with the same corresponding labels $\mathcal{Y}^{j}$, but different clients may hold $\mathcal{X}^{j}$ sampled from different data feature space. And there are two kinds of clients in the standard VFL: one is the \emph{guest} or \emph{active} client, the other is the \emph{host} or \emph{passive} client.

\begin{myDef}
Guest Client: it holds both data features $\mathcal{X}$ and labels $\mathcal{Y}$ and is responsible for calculating the loss function, gradients, Hessians and leaf values for the corresponding data samples.
\end{myDef}

\begin{myDef}
Host Client: it only holds data features $\mathcal{X}$ and is responsible for aggregating the encrypted gradients and Hessians in one bucket.
\end{myDef}

In general, the guest client and host client can be seen as parameter server and node worker defined in distributed data parallelism \cite{isard2007dryad,isard2009distributed,fetterly2009dryadlinq}. For training parametric models like logistic regression \cite{hosmer2013applied} in Algorithm \ref{alg:vfllogistic}, unlike the parameter server used in HFL and data parallelism for model parameters aggregation, the guest client in VFL aggregates the model logits $z_{b}^{k}$ from the host clients and construct the training loss function $L(y_{b},\hat{y}_{b})$ with local data labels $\hat{y}_{b}$. Therefore, the derivative of the loss with respect to each received logits $\frac{\partial L}{\partial z_{b}}$ can only be computed on the guest client. As a result, only logits $z_{b}^{k}$ and derivatives $\frac{\partial L}{\partial z_{b}}$ need to be communicated between the guest client and host clients, and the communication costs are only dependent on the number of data samples $\mathcal{X}$.

\begin{algorithm}[htbp]\footnotesize{
\caption{VFL for logistic regression} 
\algblock{Begin}{End}
\label{alg:vfllogistic}
\begin{algorithmic}[1]
\State Training data $\mathcal{X}=\left \{ \mathcal{X}^{1},\mathcal{X}^{2},...\mathcal{X}^{K} \right \}$ on $K$ clients
\State Initialize the local model ${\theta_{0}^{k}}$, $k \in (1, K)$ \\
\For {each communication round $ t = 1,2,...T\ $}
\For{batch data $\mathcal{X}_{b}^{k} \in (\mathcal{X}_{1}^{k}, \mathcal{X}_{2}^{k},...\mathcal{X}_{B}^{k})$}
\For {each \textbf{Client} $k = 1,2,...K$ in parallel}
\State Compute $z_{b}^{k}=\mathcal{X}_{b}^{k} \theta_{t}^{k}$ and send it to  \textit{guest client}
\EndFor
\State Compute $\hat{y_{b}}=a(\sum_{k=1}^{K}z_{b}^{k})$ and $L(y_{b},\hat{y}_{b})$ on \textit{guest client}
\State Compute each $\frac{\partial L}{\partial z_{b}}$ on the \textit{guest client} and send it to the corresponding \emph{host client}
\For{each \textbf{Client} $k = 1,...K$ in parallel}
\State $\theta_{t}^{k} \leftarrow \theta_{t}^{k}-\eta \frac{\partial L}{\partial z_{b}^{k}}\frac{\partial z_{b}^{k}}{\partial \theta_{t}^{k}}$
\EndFor
\EndFor
\EndFor 
\end{algorithmic}}
\end{algorithm}

\subsection{VFL with XGBoost} \label{subsec:vflxgboost}
In this work, we consider training XGBoost for classification and regression tasks.
XGBoost \cite{10.1145/2939672.2939785} is a widely used boosting tree model in tabular data training because of its better interpretation, easier parameters tuning and faster training process compared with deep learning \cite{goodfellow2016deep,lecun2015deep}. Suppose that a training set with $m$ data entries consisting of the feature space $\mathcal{X} = \{x_{1}, \cdots, x_{m} \}$ and label space $\mathcal{Y} = \{ y_{1}, \cdots, y_{m} \} $. Gradients and Hessian can be calculated from Eq. \eqref{eq:grad} and Eq. \eqref{eq:hess} for each data entry, where $y_{i}^{(t-1)}$ denotes prediction of the previous tree for the $i$-th data point. 

\begin{align}
\label{eq:grad} g_{i} &= \frac{1}{1+e^{-y_{i}^{(t-1)}}} - y_{i} = \hat{y}_{i} - {y}_{i} \\
\label{eq:hess} h_{i} &= \frac{e^{-y_{i}^{(t-1)}}}{(1+e^{-y_{i}^{(t-1)}})^{2}}
\end{align}

Before training starts, the threshold value for each feature split is defined based on the predefined buckets. For building trees, the XGBoost algorithm splits each node based on whether the current depth and the number of trees have reached the predefined maximum depth and maximum tree number. If neither of the above conditions is satisfied, a new split from all splits is defined based on maximum $L_{split}$ in Eq. \eqref{eq:lsplit}, where $\lambda$ and $\gamma$ are regularization parameters.
\begin{equation}
    L_{split} = \frac{1}{2}[\frac{\sum_{i \in I_{L}}g_{i}}{\sum_{i \in I_{L}}h_{i}+\lambda} + \frac{\sum_{i \in I_{R}}g_{i}}{\sum_{i \in I_{R}}h_{i}+\lambda} - \frac{\sum_{i \in I}g_{i}}{\sum_{i \in I}h_{i}+\lambda}] - \gamma
\label{eq:lsplit} 
\end{equation}
The current node is the leaf node if the maximum $L_{split}$ is smaller than the predefined threshold. The leaf value can be calculated according to Eq. \eqref{eq:lw}.
\begin{equation}
    w = - \frac{\sum_{i \in I} g_{i}}{\sum_{i \in I} h_{i}+\lambda}.
\label{eq:lw}    
\end{equation}

\begin{figure}
\centering 
\includegraphics[width=0.46\textwidth]{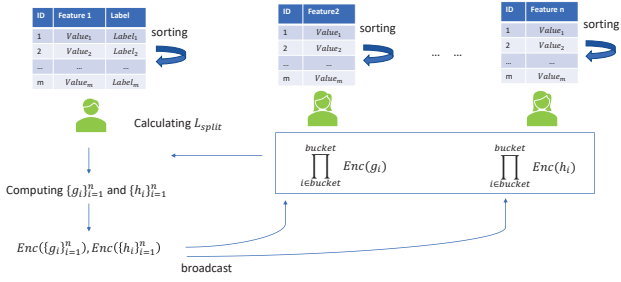}
\caption{The secureboost system.}
\label{secureboost}
\end{figure}

As shown in Figure \ref{secureboost}, Cheng \textit{et al.} \cite{cheng2021secureboost} propose a solution, where the guest client calculates the corresponding gradients and Hessians for all data points and sends them to the host clients in ciphertexts under Paillier encryption. Each host client sorts features and makes splits according to the predefined number of buckets. After that, each host client aggregates all the gradients and Hessians in one bucket and sends them to the guest client that will decrypt the aggregated values and find a maximum $L_{split}$ based on Eq.(\ref{eq:lsplit}). These steps will be done recursively until the trees reach the maximum depth or the number of trees, or when $L_{split}$ is smaller than the predefined threshold. 
Compared with \cite{cheng2021secureboost}, Tian \textit{et al.} \cite{tian2020federboost} propose a more efficient scheme in which the host clients sort their features first, and send the order of different features to the guest client with noise satisfying differential privacy. Once the guest client receives all orders of the features, all the training steps can be processed on the guest client. During the training, the labels never leave from the guest client. Besides, the guest client only knows the order of different features. Therefore, the privacy of the training set is guaranteed. 

\subsection{Additively Homomorphic encryption}

Homomorphic encryption (HE) \cite{ElGamal1985public, Paillier1999public} is a widely used encryption scheme for secure outsourced computation. An additively HE guarantees that multiple encrypted messages can be calculated without decrypting in advance.  


One of the popular schemes is the Paillier cryptosystem \cite{Paillier1999public}, which is a probabilistic encryption for public key cryptography based on the Decisional composite residuosity assumption. This work applies the Paillier cryptosystem in PIVODL for preventing the leakage of labels. 


The Paillier cryptosystem works as follows:
\begin{itemize}

\item \textbf{Key generation:} Randomly select two large prime numbers $p$ and $q$ s.t. $\mathrm{gcd}(pq, (p-1)(q-1))=1$. Let $n=pq$ and $\lambda = \mathrm{lcm}(p-1, q-1)$. After that, randomly choose an integer $g \in \mathbb{Z}_{n^{2}}^{*}$ and compute $\mu = (L(g^{\lambda}\mod n^{2}))^{-1}\mod n$, where $L$ is a function defined as $L(x)=\frac{x-1}{n}$. The public key $pk$ and secret key $sk$ are $(n, g)$ and $(\lambda, \mu)$, respectively.

\item \textbf{Encryption:} To encrypt a message $m\in \mathbb{Z}_{n}^{*}$, choose a random number $r\in \mathbb{Z}_{n}^{*}$ as an ephemeral key, the ciphertext is calculated as $c = g^{m}\cdot r^{n} \mod n^{2}$.

\item \textbf{Decryption:} The plaintext message $m$ can only be decrypted if the secret key $(\lambda, \mu)$ is available by computing
$m = L(c^{\lambda} \mod n^{2})\cdot \mu \mod n$

\end{itemize}

The Paillier satisfies the additive homomorphic property: $\text{Enc}(m1)*\text{Enc}(m2)=\text{Enc}(m1+m2)=g^{m_{1}}r_{1}^{n} \cdot g^{m_{2}}r_{2}^{n}$.

\subsection{Differential privacy}

Differential privacy (DP) \cite{dwork2006calibrating, dwork2006our} is a data privacy protection system. It can publish statistical information while keeping individual data private. If a substitution of an arbitrary single data entity does not cause statistically distinguishable changes, the algorithm used to run dataset satisfies DP. The definition of differential privacy is as follows:
\begin{myDef}
($\epsilon$ - differential privacy \cite{dwork2006calibrating}). Given a real positive number $\epsilon$ and a randomized algorithm $\mathcal{A}$: $\mathcal{D}^{n} \rightarrow \mathcal{Y}$. Algorithm $\mathcal{A}$ provides $\epsilon$ - differential privacy, if for all data sets D, $D^{'} \in \mathcal{D}^{n}$ differs on only one entity, and all  $\mathcal{S} \subseteq \mathcal{Y}$ satisfy:
\begin{equation}
    Pr[\mathcal{A}(D)\in \mathcal{S}] \leq exp(\epsilon)\cdot Pr[\mathcal{A}(D^{'})\in \mathcal{S}]
\end{equation}
\end{myDef}

To achieve $\epsilon$ - differential privacy, some mechanisms are proposed to add designed noise to queries. In this work, we apply a relaxation of $\epsilon, \delta$ - DP, called approximate DP \cite{dwork2006our}. The formal definition is as follows:
\begin{myDef}
($(\epsilon, \delta)$ - differential privacy ). Given two real positive numbers $(\epsilon, \delta)$ and a randomized algorithm $\mathcal{A}$: $\mathcal{D}^{n} \rightarrow \mathcal{Y}$. An algorithm $\mathcal{A}$ provides $(\epsilon, \delta)$ - differential privacy if it satisfies:
\begin{equation}
    Pr[\mathcal{A}(D)\in \mathcal{S}] \leq exp(\epsilon)\cdot Pr[\mathcal{A}(D^{'})\in \mathcal{S}] + \delta
\end{equation}
\end{myDef}

Gaussian mechanism \cite{abadi2016deep,9069945,geyer2017differentially} is usually applied in DP by adding Gaussian noise $\mathcal{N}\sim N(0, \Delta^{2}\sigma^{2})$ to the output of the algorithm, where $\Delta$ is $l_{2}$ - norm sensitivity of $D$ and $\sigma \geq \sqrt{2\ln(1.25/\delta)}$. According to \cite{abadi2016deep}, this noise can only achieve $(O(q\epsilon), q\delta)$-differential privacy, where $q$ is the sampling rate per lot. To make the noise small while satisfying differential privacy, we follow the definition in \cite{abadi2016deep} and set $\sigma \geq c\frac{q\sqrt{Tlog(1/\delta)}}{\epsilon}$, $c$ is a constant, and $T$ refers to the number of steps, to achieve $(O(q \epsilon \sqrt{T}), \delta)$-differential privacy.

\section{Problem formulation}

\subsection{System model}
We assume that the clients use the private set intersection \cite{kolesnikov2017practical, pinkas2014faster} to align data IDs before training starts. We consider that the complete training set with $m$ data entries consists of a feature space $\mathcal{X} = \{x_{1}, \cdots, x_{m} \}$, each containing $d$ features, and a label space $\mathcal{Y} = \{ y_{1}, \cdots, y_{m} \} $. Besides, $n$ clients possessing at least one feature $\{ X_{j}^{c} \mid j \in \{1, \cdots, d\} \}$ with all data points and part of labels $\{ y_{i}^{(c)} \mid i\in \{ 1, \cdots, m\} \}$ choose to train a model collaboratively, where $X_{j}^{(c)}$ and $y_{i}^{(c)}$ represent the $j$-th feature and the $i$-th label owned by the $c$-th client, respectively. 

Since the assumption that all labels are held by only one party is not realistic, we consider a variant of VFL called VFL over distributed labels (VFL-DL), where labels are distributed on multiple clients. The formal definition of VFL-DL is as follows:

\begin{myDef}
(VFL-DL). For any two clients $j, j^{'} \in n $, given a training set with $m$ data points, which consists of a feature space $\mathcal{X} = \{X_{1}^{c}, \cdots, X_{d}^{c} \mid j\in n \}$ and labels $\mathcal{Y} = \{ y_{1}^{c}, \cdots, y_{m}^{c} \mid j\in n\}$:
\begin{equation}
X^{c} \neq X^{c^{'}}, y^{c} \neq y^{c^{'}}, \mathcal{I}^{c} = \mathcal{I}^{c^{'}}, \forall c \neq c^{'}
\end{equation}
\end{myDef}

The frequently used notations are summarized in Table \ref{table:notation}.

\begin{table}[]
\caption{Notations summary}
\begin{tabular}{|p{30pt}|p{200pt}|}
\hline
\centering
 \textbf{Notation}&  \textbf{Description}  \\ \hline
 $\mathcal{X}$ & feature space \\ \hline
 $X_{j}^{c}$ & the $j$-th feature owned by the $c$-th client  \\ \hline
 $x_{i}$ & $i$-th data point with $d$ features \\ \hline
 $\mathcal{Y}$&label space  \\ \hline
 $y_{i}^{c}$ & the label of the $i$-th data point owned by the $c$-th client \\ \hline
 $\mathcal{I}$ & data index \\ \hline
 $g_{i}^{c}$ & the gradient of the $i$-th data point owned by the $c$-th client \\ \hline
 $G_{j, v}^{c}$ & the aggregated \textit{left} gradient sum for feature $j$  with threshold $v$ owned by the $c$-th client \\ \hline
 $h_{i}^{c}$ & the Hessian of the $i$-th data point owned by the $c$-th client \\ \hline
 $H_{j, v}^{c}$ & the aggregated \textit{left} Hessian sum for feature $j$ with threshold $v$ owned by the $c$-th client \\ \hline
 m & number of data points \\ \hline
 n & number of clients \\ \hline
 d &  number of features \\ \hline
 b &  number of buckets (feature thresholds) \\ \hline
 $\epsilon, \delta$ & parameters of differential privacy \\ \hline
 $\Delta$ & sensitivity \\ \hline
 $L_{split}$ &  similarity or impurity score \\ \hline
 $w$ & leaf value \\ \hline
 $pk$ & public key \\ \hline
 $sk$   & secret key \\ \hline
 

\end{tabular}
\label{table:notation}
\end{table}






 

\subsection{Threat model}
In this work, we mainly consider potential threats from clients and outsiders. We assume that all participating clients during training are \textit{honest-but-curious}, which means they strictly follow the designed algorithm but try to infer the labels of other clients from the received information. Besides, we also consider that outside attackers may eavesdrop communication channels and try to reveal information of the training data during the whole VFL process. The main goal of this work is to prevent the leakage of private data, and other attacks like data poisoning and backdoor attacks, which may deteriorate model performance, are not considered here.  

\subsection{Privacy concerns}
Since the data labels are distributed across multiple clients in our framework, each client becomes 'guest' client which is responsible for both aggregating $\sum g_{i}$ and $\sum h_{i}$ and updating label predictions. This would further lead to potential node split and prediction update leakage during the execution of vertical federated XGBoost protocol, even $g_{i}$ and $h_{i}$ are protected by additive homomorphic encryption.

\subsubsection{Split leakage} \label{subsec:splitleak}
In order to find the best split with the largest $L_{split}$ value for the current tree node, each client needs to ask other clients for the corresponding summation $G_{j,v}^{c}$ and $H_{j,v}^{c}$ of all possible splits. This may result in a risk of privacy leakage from two sources: one is the summation differences among all feature splits of the same node, and the other is the summation differences between parent node and child node.

A simple example is shown in Fig. \ref{fig:da}, where client $u_{1}$ wants to compute $L_{split}$ values from $Split_{1}$ to $Split_{4}$ with the help of $u_{2}$ and $u_{3}$. 
The label information about $u_{2}$ or $u_{3}$ can be easily deduced by  \emph{differential attacks}, even if the message is encrypted. For instance, $u_{1}$ requires aggregated summation $G_{1,split_{2}}=g_{1}^{2}+g_{2}^{1}$ and  $G_{1,split_{3}}=g_{1}^{2}+g_{2}^{1}+g_{3}^{3}$ to calculate $L_{split_{2}}$ and $L_{split_{3}}$, respectively. For privacy concerns, both $g_{1}^{2}$ and $g_{3}^{3}$ are encrypted before aggregation and $u_{1}$ only achieves aggregated summations and has no knowledge of these two gradient values. However, $g_{3}$ can still be derived based on the \emph{difference} (subtract here) between $G_{1,split_{3}}$ and  $G_{1,split_{2}}$. Similarly, $g_{1}^{2}$ can be easily obtained by  $G_{1,split_{2}} - g_{2}^{1}$.


\begin{figure}
\centering 
\includegraphics[width=0.46\textwidth]{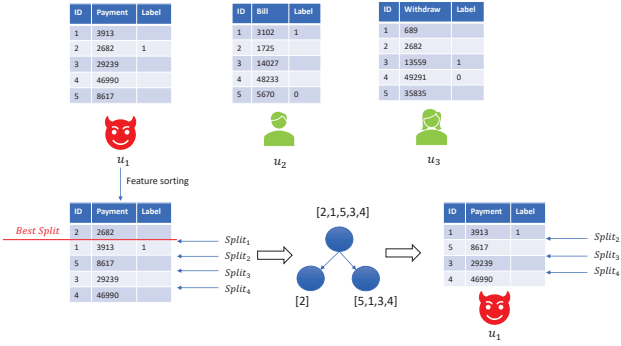} 
\caption{A simple example of split leakage} 
\label{fig:da}
\end{figure}
 
After finding the maximum $L_{split}$ for the current node, we can do the split by allocating data 2 to the left child node and the rest of data instances to the right child node. If $u_{1}$ wants to continue splitting on the right child node, it still requires to know the aggregated summation for every possible split. Specifically, differential attacks can be performed upon the $Split_{3}$ between the parent node ($g_{1} + g_{5} + g_{2}$) and right child node ($g_{1} + g_{5}$ ), thus, $g_{2}$ can be easily deduced by $(g_{1} + g_{5} + g_{2}) - (g_{1} + g_{5})$.


\subsubsection{Prediction update leakage} \label{subsec:predictionleak}
Except for split leakages, private label information may also be disclosed in updating the predictions. As introduced in Section \ref{subsec:vflxgboost}, XGBoost is an ensemble boosting algorithm that requires to sequentially  construct multiple decision trees for regression or classification tasks. And the label prediction $\widehat{y}$ for each data sample should be updated based on the leaf value of the current tree, whenever a new decision tree is being built. Unlike in conventional VFL where the guest client performs all leaf weights calculations and label prediction updates; in PIVODL, labels are stored on multiple clients based on their split features of the parent nodes. As a result, it is very likely that some clients will receive predicted labels of other clients.

An example is shown in Fig. \ref{fig:label_update}, where three leaf nodes are stored on three clients $u_{1}$, $u_{2}$ and $u_{3}$, respectively. It is clear to see that $u_{1}$ knows the leaf weight of data 1, 2, 6 and 9, thus, the label prediction of the current decision tree for data 1 on $u_{2}$ and data 6 on $u_{3}$ are leaked to $u_{1}$. There is no doubt that the predicted values will be closer to the true labels with a high probability over the training process. Therefore, a more accurate guess can be made for specific data samples along with the model convergence, if the clients can get some of their leaf weights.

\begin{figure}
\centering 
\includegraphics[width=0.46\textwidth]{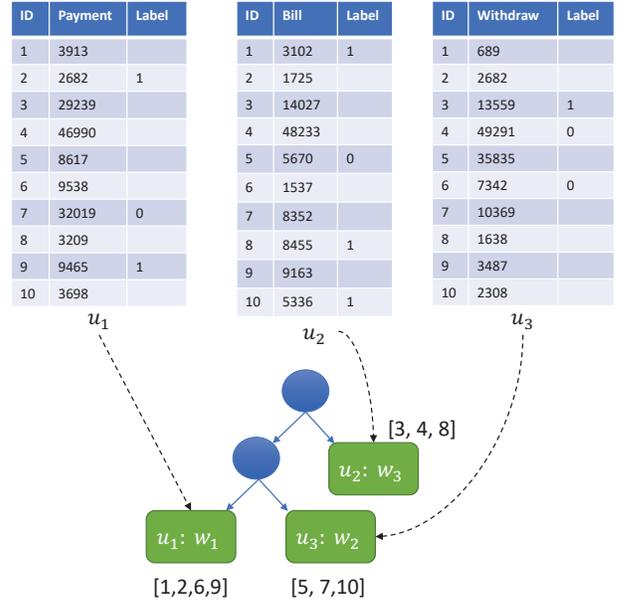} 
\caption{Label prediction update in VFL with the true data labels distributed across multiple clients.} 
\label{fig:label_update}
\end{figure}

\section{The proposed system}


In order to alleviate the aforementioned two leakage issues, we propose a novel privacy-preserving vertical federated XGBoost learning system over distributed labels (PIVODL), in which an ensemble of boosting trees can be constructed without revealing intermediate gradient and Hessian information. In addition, a partial differential privacy scheme is proposed to protect the predicted labels with an acceptable drop of the model performance. Moreover, we also impose attack inferences to our system and verify that no client inside the system is is able to receive or deduce label information from other clients.

\subsection{A secure node split protocol}
As discussed in the previous section, private label information may be disclosed during the calculation of $L_{split}$ and the branch splits. This is because the client used for \emph{node split} also \emph{acquire} all the data ID information for possible feature splits of current tree node. Therefore, as long as the client knows the intermediate aggregation results of $G_{j, v}^{c}$ and $H_{j, v}^{c}$ from other clients, $g_{i}^{c}$ and $h_{i}^{c}$ can be easily deduced through the differences between two adjacent summation results, regardless of whether any cryptographic operations are used.

In order to prevent label information from being leaked during node splits, the proposed PIVODL framework performs feature splits and aggregation summations on two \emph{separate} clients. One is called the \emph{source client} and the other is called the \emph{split client}. In the following, we define the roles of the source and split clients. 

\begin{myDef}
The \textbf{source client} is the split owner that contains all data IDs of the local features. However, it does not have any information of the corresponding summation values $G_{j, v}$ and $H_{j,v}$, and thus, is not able to calculate the similarity scores in Eq. \eqref{eq:lsplit} and leaf weights in Eq. \eqref{eq:lw}.
\label{source_clients}
\end{myDef}

\begin{myDef}
The \textbf{split client}, on the other hand, is used to compute both the similarity scores in Eq. \eqref{eq:lsplit} and leaf weights in Eq. \eqref{eq:lw}, however, it has no idea of their corresponding data IDs on other clients.
\label{split_clients}
\end{myDef}

By making use of the source and split clients, the tree nodes can be split without leaking the data privacy in the following steps:
\begin{enumerate}
    \item Each client locally generates its unique \textit{Paillier} key pairs and exchanges the public key with each other. Meanwhile, the data IDs of the samples with labels $I^{c}$ are also exchanged with each other for privacy concern, which will be discussed in the next section.
    
    \item Each client $c$ computes locally $G^{c}=\sum_{i}g_{i}^{c}$ and $H^{c}=\sum_{i}h_{i}^{c}$ for the current tree node according the to available data labels $y_{i}^{c}$. Randomly select two clients to be the \emph{aggregation client} and \emph{encryption client}. Each client uses the public key of the \emph{encryption client} to encrypt $G^{c}$ and $H^{c}$ and sends the encrypted $\langle G^{c} \rangle$ and $\langle H^{c} \rangle$ to the \emph{aggregation client} for summation: $\langle G \rangle=\prod_{c}\langle G^{c} \rangle= \langle \sum_{c}G^{c} \rangle$, $\langle H \rangle=\prod_{c}\langle H^{c} \rangle= \langle \sum_{c}H^{c} \rangle$. After that, the \textit{aggregation client} sends $\langle G \rangle$ and $\langle H \rangle$ to the \emph{encryption client} for decryption: $G=Dec\langle G \rangle$,  $H=Dec\langle H \rangle$. Finally, the \emph{encryption client} can broadcast the decrypted $G$ and $H$ for calculating the information gain of the current node by $\frac{1}{2}\frac{G}{H+\lambda}$.
    
    \item \label{setsourceclient} All client are regarded as the source clients and sort their data samples according to the split buckets (feature thresholds) of all the local features. In addition, each source client randomly selects one client from other clients to be its \emph{unique} split client. And then, each source client $c$ computes local left branch sum $G_{j,v}^{c}=\sum_{i\in \{ i | x_{i,j} < v \}}g_{i}^{c}$ and $H_{j,v}^{c}=\sum_{i\in \{ i | x_{i,j} < v \}}h_{i}^{c}$ based on its owned labels $y_{i,j}^{c}$, and sends the \emph{intersected} missing data IDs $I_{j,v}^{{c}'}=I_{j,v}^{c} \bigcap I^{{c}'},{c}' \in [1,m],{c}'\neq c$ together with its split client ID and temporary record number $R_{j, v}$ to other clients.
    
    \item After receiving intersected data IDs and corresponding split client ID from source clients, each client $c$ calculates $G_{j,v}^{c}=\sum_{i\in \{ i | i \in I_{j,v}^{c} \}}g_{i}^{c}$ and $H_{j,v}^{c}=\sum_{i\in \{ i | i \in I_{j,v}^{c} \}}h_{i}^{c}$, and use the public key of its corresponding split client to encrypt the summation values. And then each client $c$ returns the encrypted $\langle G_{j,v}^{c} \rangle$ and $\langle H_{j,v}^{c} \rangle$ back to the corresponding source client.
    
    \item After the source client receives all $\langle G_{j,v}^{c} \rangle$ and $\langle H_{j,v}^{c} \rangle$ from other clients, it will aggregate (sum) all received encrypted summation with the local gradient and Hessian summation to get the final encrypted $\langle G_{j,v} \rangle=\prod_{c}\langle G_{j,v}^{c} \rangle= \langle \sum_{c}G_{j,v}^{c} \rangle$ and $\langle H_{j,v} \rangle=\prod_{c}\langle H_{j,v}^{c} \rangle= \langle \sum_{c}H_{j,v}^{c} \rangle$ for all bucket splits.
    
    \item Each source client sends $\langle G_{j,v} \rangle$ and $\langle H_{j,v} \rangle$ to its corresponding split client for decryption and the split clients can compute the impurity scores by $L_{split_{j,v}}=\frac{1}{2}[\frac{G_{j,v}}{H_{j,v}+\lambda}+\frac{G-G_{j,v}}{H-H_{j,v}+\lambda}-\frac{G}{H+\lambda}]-\gamma$. All the selected split clients get the largest local impurity score at first and then broadcast these values for further comparison to achieve the largest global impurity score $L_{split_{j,v}^{max}}$. The split client that owns $L_{split_{j,v}^{max}}$ would send the temporary record number $R_{j,v}$ back to the source client. The source can get the best split feature $j$ and threshold $v$ according to received $R_{j,v}$ and privately build a lookup table \cite{cheng2021secureboost} to record $j$ and $v$ with a unique \emph{record ID}. And then the tree node can be constructed by the source client ID and the record ID as shown in Fig. \ref{fig:node_split}.
    
    \item If the current tree node is not the leaf node, we will continue splitting either the left or right child node. If the left node becomes the current node for splitting, the split client sets $G=G_{j,v}$, $H=H_{j,v}$, and sends them to all other clients. Otherwise, the split client sets $G=G-G_{j,v}$, $H=H-H_{j,v}$, and sends them to all other clients. In addition, the source client sends the left split node data IDs $I_{j,v}^{L}$ or the right split node data IDs $I_{j,v}^{R}$ to other clients. Then, each client computes the current information gain $\frac{1}{2}\frac{G}{H+\lambda}$. Go to step \eqref{setsourceclient} and repeat the process.
\end{enumerate}

\begin{figure}
\centering 
\includegraphics[width=0.46\textwidth]{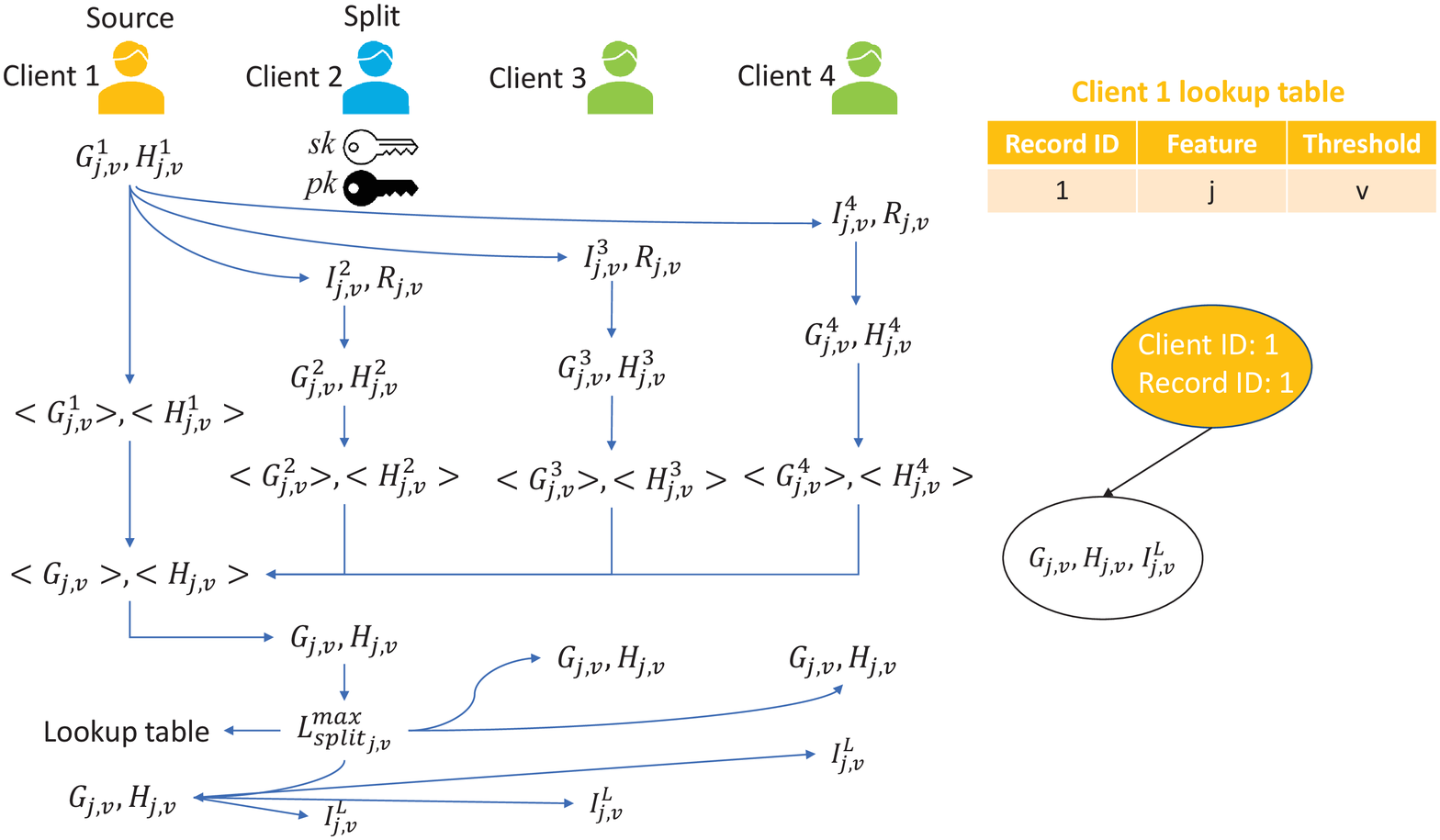} 
\caption{Secure node split for the left branch, where we assume the feature $j$ and threshold $v$ will give the best split with the largest impurity score $L_{split}$.} 
\label{fig:node_split}
\end{figure}

It is clear to see that above proposed secure node split protocol encrypts the intermediate summation values $G_{j,v}$ and $H_{j,v}$ for privacy concern and is able to effectively defend differential attacks. The source client contains all the split node IDs and their corresponding split features and thresholds. However, as shown in Fig. \ref{fig:node_split}, the source client $1$ can only get the encrypted sum $\langle G_{j,v} \rangle$ and $\langle H_{j,v} \rangle$ for feature $j$ and threshold $v$. Thus, differential attack does not work in this scenarios since only the split client $2$ has the secret key to get the plaintext $G$ and $H$. On the other hand, the split client $2$ only knows local intersected data IDs $I^{2}_{j,v}$ and the final summation values $G$ and $H$, but it has no idea of the corresponding data IDs. In addition, the split client $2$ only receive the temporary record number $R_{j,v}$ (this is just a arbitrary number defined by the source client, 'temporary' means this number would be reset for the next node split) of the split and has no knowledge of the corresponding real split feature and threshold on the source client. 

After all split clients decrypt $G_{j,v}$ and $H_{j,v}$ for all available features $j$ and threshold $v$, they can calculate $L_{split}$ according to Eq. \eqref{eq:lsplit}. In order to reduce  both potential privacy leakage and communication costs, each split client compares and gets the largest $L_{split}$ value locally at first. The advantage of doing this is that each split client just needs to broadcast only one impurity score $L_{split}$ for comparison. If the current tree node continues splitting to the left branch as shown in Fig. \ref{fig:node_split}, the split client needs to transmit the decrypted $G_{j,v}$ and $H_{j,v}$ to all other clients as $G$ and $H$ of the next tree node. In addition, the source client needs to send the data IDs $I^{L}_{j,v}$ of the left split to other clients. Therefore, each client can deduce the gradient and Hessian values from other clients through the received $G_{j,v}$, $H_{j,v}$ and $I^{L}_{j,v}$ as discussed in Section \ref{subsec:splitleak}.
In order to alleviate this privacy leakage issue, we adopt a simple data instance threshold strategy. For example, if the number of received intersected data IDs is less than the pre-defined instance threshold, this client will reject to compute the local $G_{j,v}^{c}$ and this split will be removed during the node split. It should be mentioned that continue splitting is robust to differential attacks, since only one $G_{j,v}$ and $H_{j,v}$ are broadcast for each split.
The overall secure node split protocol is also shown in Algorithm \ref{alg:securesplit}.

\begin{algorithm}[htbp]\footnotesize{
\caption{Secure node split protocol } 
\algblock{Begin}{End}
\label{alg:securesplit}
\begin{algorithmic}[1]
\State{\textbf{Input: }$I$, data IDs of current tree node}
\State{\textbf{Input: }$G$, gradient sum of current tree node}
\State{\textbf{Input: }$H$, Hessian sum of current tree node}
\State{\textbf{Input: }$T$, instance threshold of current tree node} \\

\State{\textbf{\textit{Split client sets}} ${C}'= \emptyset $}
\For{each \textbf{\textit{source client}} $c=1,2,...m$}
\State{Randomly select a \textbf{\textit{split client ${c}'$}}, ${c}' \in [1,m],{c}'\neq c $}
\State{$L_{split}^{{c}'} \leftarrow 0 $, ${C}' \leftarrow {C}' \cup {c}' $}
\EndFor \\

\For{each \textbf{\textit{source client}} $c=1,2,...m$}
\State{$R_{j,v}^{c} \leftarrow 0$ }
\For{each feature $j=1,2,...d^{c}$}
\For{each threshold $v=1,2,...b_{j}$}
\State{$pk \leftarrow pk \textnormal{ of client } {c}'$}
\State{$G_{j,v}^{c} \leftarrow  \sum_{i\in \{ i | x_{i,j} < v \}}g_{i}^{c}$, $\langle G_{j,v}^{c} \rangle  \leftarrow  Enc_{pk}(G_{j,v}^{c})$}
\State{$H_{j,v}^{c} \leftarrow  \sum_{i\in \{ i | x_{i,j} < v \}}h_{i}^{c}$, $\langle H_{j,v}^{c} \rangle  \leftarrow  Enc_{pk}(H_{j,v}^{c})$}
\State{$R_{j,v}^{c} \leftarrow R_{j,v}^{c} + 1$}
\For{each \textbf{\textit{client}} ${c}''=1,2,...m,{c}''\neq c$}
\State{$I_{j,v}^{{c}''} \leftarrow I_{j,v}^{c} \bigcap I^{{c}''}$}
\State{Send $I_{j,v}^{{c}''}$, $R_{j,v}^{c}$ to client ${c}''$: }
\State{$pk \leftarrow pk \textnormal{ of client } {c}'$}
\If{$|I_{j,v}^{{c}''}| \geq T$}
\State{$G_{j,v}^{{c}''} \leftarrow  \sum_{i\in I_{j,v}^{{c}''}}g_{i}^{{c}''}$, $\langle G_{j,v}^{{c}''} \rangle  \leftarrow  Enc_{pk}(G_{j,v}^{{c}''})$}
\State{$H_{j,v}^{{c}''} \leftarrow  \sum_{i\in I_{j,v}^{{c}''}}h_{i}^{{c}''}$, $\langle H_{j,v}^{{c}''} \rangle  \leftarrow  Enc_{pk}(H_{j,v}^{{c}''})$}
\State{Return $\langle G_{j,v}^{{c}''} \rangle$ and $\langle H_{j,v}^{{c}''} \rangle$ to \textbf{\textit{source client $c$}}}
\EndIf
\EndFor
\If{\textbf{\textit{Source client $c$}} receives $m-1$ $\langle G_{j,v}^{{c}''} \rangle$ and $\langle H_{j,v}^{{c}''} \rangle$}
\State{$\langle G_{j,v} \rangle=\prod_{c}\langle G_{j,v}^{c} \rangle=\langle \sum_{c} G_{j,v}^{c} \rangle$}
\State{$\langle H_{j,v} \rangle=\prod_{c}\langle H_{j,v}^{c} \rangle=\langle \sum_{c} H_{j,v}^{c} \rangle$}
\State{Send $\langle G_{j,v} \rangle$ and $\langle H_{j,v} \rangle$ to \textbf{\textit{split client ${c}'$}}:}
\State{$G_{j,v} \leftarrow Dec \langle G_{j,v} \rangle$}
\State{$H_{j,v} \leftarrow Dec \langle H_{j,v} \rangle$}
\State{$L_{split_{j,v}}=\frac{1}{2}[\frac{G_{j,v}}{H_{j,v}+\lambda}+\frac{G-G_{j,v}}{H-H_{j,v}+\lambda}-\frac{G}{H+\lambda}]-\gamma$}
\If{$L_{split_{j,v}} > L_{split}^{{c}'}$}
\State{$L_{split}^{{c}'} \leftarrow L_{split_{j,v}}$, $R_{j,v}^{c,best} \leftarrow R_{j,v}^{c}$}
\EndIf
\EndIf
\EndFor
\EndFor
\EndFor \\

\State{\textbf{Output: } the \textbf{\textit{split client ${c}'$}} with the largest impurity score $L_{split}^{{c}'}$}

\end{algorithmic}}
\end{algorithm}

\subsection{Construction of private tree nodes}
After the split client ${c}'$ with the largest impurity score is determined, ${c}'$ will send the temporary record number $R_{j,v}^{c, best}$ of this split to the corresponding source client $c$. And then the source client $c$ can add the feature $j$ and threshold $v$ with a unique record ID into the local lookup table \cite{cheng2021secureboost}, as shown in Fig. \ref{fig:node_split}. After that, $c$ broadcasts the unique record ID to any other client that can annotate the split of current tree node with client $c$'s ID and the record ID.

Meanwhile, the source client $c$ can check and tell the split client ${c}'$ whether the split child nodes are leaf or not based on the condition shown in line 5 and line 15 of Algorithm \ref{alg:checkleaf}, respectively. If the child node is not leaf, the source client $c$ broadcasts the split data IDs and the split client ${c}'$ broadcasts the corresponding summation of gradients and Hessian for continuing splitting the child node. If the child node is determined to be a leaf node, the split client ${c}'$ computes the local leaf weight and stores it in a \emph{received} lookup table with source client $c$ ID and record ID.

\begin{algorithm}[htbp]\footnotesize{
\caption{Construction of private tree nodes, where ${c}'$ is the split client with the largest impurity score, $c$ is the source client of ${c}'$, $R_{j,v}^{c,best}$ is the temporary record number of the best split on $c$, and $T_{sample}$ is the minimum data samples for each decision node.} 
\algblock{Begin}{End}
\label{alg:checkleaf}
\begin{algorithmic}[1]

\State{${c}'$ sends $R_{j,v}^{c,best}$ and split branch to $c$}
\State{$c$ adds feature $j$ and threshold $v$ with a unique record ID into the local lookup table}
\State{$c$ broadcasts the record ID and each client can annotate the split of current tree node with client $c$ ID and record ID}
\If{Split branch is left}
\If{Reach the maximum depth or $|I_{j,v}| < T_{sample}$}
\State{$c$ tells ${c}'$ left split node is the leaf node and stops splitting}
\State{$w_{j,v}^{L}=-\frac{G_{j,v}}{H_{j,v}+\lambda}$}
\State{${c}'$ records left leaf value $w_{j,v}^{L}$ into \emph{received} lookup table with record ID and client $c$ ID}
\Else
\State{Continue splitting left child node:}
\State{$c$ broadcasts $I_{j,v}$ to all other clients}
\State{${c}'$ broadcasts $G_{j,v}$ and $H_{j,v}$ to all other clients}
\EndIf
\ElsIf{Split branch is right}
\If{Reach the maximum depth or $|I-I_{j,v}| < T_{sample}$}
\State{$c$ tells ${c}'$ right split node is the leaf node and stops splitting}
\State{$w_{j,v}^{R}=-\frac{G-G_{j,v}}{H-H_{j,v}+\lambda}$}
\State{${c}'$ records right leaf value $w_{j,v}^{R}$ into \emph{received} lookup table with record ID and client $c$ ID}
\Else
\State{Continue splitting right child node:}
\State{$c$ broadcasts $I-I_{j,v}$ to all other clients}
\State{${c}'$ broadcasts $G-G_{j,v}$ and $H-H_{j,v}$ to all other clients}
\EndIf
\EndIf
\end{algorithmic}}
\end{algorithm}

As a result, no client except the source client knows the feature $j$ and threshold $v$ of the tree node split, since the lookup table is stored locally without sharing with any other clients. And when it turns to new instance prediction, the source client needs to combine with the split client to get the leaf value. For instance, as shown a simple decision tree in Fig. \ref{fig:lookup}, client $1$ is the source client of the root client and client $2$ is its corresponding split client. If a new data sample with age 30 and weight 150 comes for prediction, it will be sent to client $1$ for judgement. And then client $1$ uses its own lookup table to determine that this instance should go to the left branch ($30 < 35$). After that, this data is sent to client $2$ to get its label prediction 0.41 ($150 > 120$, go to right leaf). Also considering potential label leakage during data prediction for ensemble decision trees, we adopt secure aggregation scheme proposed in \cite{10.1145/3133956.3133982} for secure multi-party computation. Therefore, the final prediction for the data instance can be derived without revealing any leaf values. 

\begin{figure}
\centering 
\includegraphics[width=0.46\textwidth]{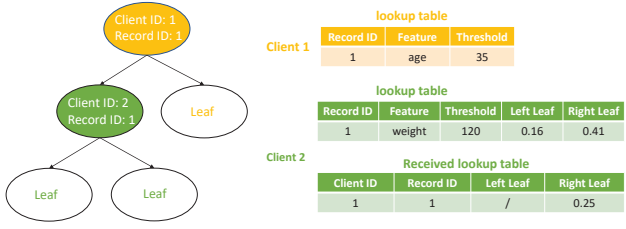} 
\caption{Each split node is recorded with the source client ID and a unique record ID. Both the lookup table and received lookup table are stored locally without sharing with any other client. The received lookup table needs to contain the source client ID and the corresponding record ID.} 
\label{fig:lookup}
\end{figure}



\subsection{Partial differential privacy}
Suppose that the child node is a leaf node because training reaches either the maximum depth or the minimum node data samples. Then the leaf weight used for prediction will be calculated by split client (shown in line 5 and line 14 in Algorithm \ref{alg:checkleaf}). There is no doubt that as the training proceeds, the \emph{accumulated} label prediction $\widehat{y_{i}}$ will be closer to the true data label $y_{i}$, and since the source client knows all the split data IDs, as long as it receives the leaf weight for prediction update, 'parts of' \footnote{Here we call it 'parts of', because XGboost is an ensemble tree algorithm. A client can get the real label predictions only if the leaf weights of all boosting trees of data samples are revealed to the client.} the true labels of the corresponding data samples may have the risk of being revealed. However, all participating clients require the leaf value $w$ to update the corresponding local predictions. Thus, it is inevitable that $w$ will be sent to the source client, which means the source client has a potential opportunity to guess the true labels through the received $w$.



In order to deal with this prediction update leakage issue, we propose a partial differential privacy (DP) mechanism, where the leaf value is perturbed with Gaussian noise $\mathcal{N}$ before being sent to the source client. Since the calculation of the leaf value $w$ is based on the summation of $g$ and $h$ (the number of $g$ and $h$ is not fixed), its range is unknown beforehand. Before adding noise to the leaf values, we clip the leaf values and set the clip value as the sensitivity of differential privacy. The sensitivity of $w$ is calculated according to Eq. \eqref{eq:differential_privacy}, where $C$ refers to the predefined clip value and $\{ G,H\},\{G^{'},H^{'} \}$ are two sets, differing in one pair of $g$ and $h$.

\begin{equation}
\begin{aligned}
\Delta_{w} &=  \mathop{max}\limits_{\{ G,H\},\{G^{'},H^{'} \}} \| w^{\{ G,H\}} - w^{\{G^{'},H^{'}\}} \| \\ &=2C, \; \forall  w^{\{ G,H\}},w^{\{G^{'},H^{'}\}} \leq C \\  
\end{aligned}
\label{eq:differential_privacy}
\end{equation}

By \textit{partial} DP, we mean here that DP is apply on $w$ values that are sent to the source client only, since other clients only have the intersected local data IDs and have no idea about any other data IDs. The advantage of using this strategy is that the correct distributed label predictions can be achieved to the greatest extent, while preventing the source client from guessing the true labels.
The details of the partial DP mechanism is presented in Algorithm \ref{alg:partialdp}:

\begin{algorithm}[htbp]\footnotesize{
\caption{Partial differential privacy. ${c}'$ is the split client and $c$ is the corresponding source client} 
\algblock{Begin}{End}
\label{alg:partialdp}
\begin{algorithmic}[1]
\State{\textbf{Input: }$w_{j,v}$, computed leaf weight on split client ${c}'$} \\
\For{each ${c}''=1,2,...m, {c}'' \neq {c}'$}
\If{${c}'=c$}
\State{$\hat{w_{j,v}} \leftarrow w_{j,v} / max(1, \frac{||w_{j,v}||}{C})$}
\State{$\hat{w_{j,v}} \leftarrow \hat{w_{j,v}} + \mathcal{N}\sim N(0, {2C}^{2}\sigma^{2})$}
\State{Send perturbed $\hat{w_{j,v}}$ to $c$}
\Else
\State{Send $w_{j,v}$ to ${c}''$}
\EndIf
\EndFor
\end{algorithmic}}
\end{algorithm}




\subsection{Security analysis}

Our proposed PIVODL framework avoids revealing both data features and labels on each participating client, and in this section, we will make a detailed security analysis of the leakage sources and protection strategies.

During the training process, each client sorts the data instances based on the features and their corresponding bucket thresholds which are stored privately on the local devices. In order to find the maximum impurity score $L_{split}$, the source clients require the corresponding $g$ and $h$ from other clients, since labels are distributed on multiple clients. The true data labels can be deduced from the computed gradients $g$.
\begin{theorem}
Given a gradient $g$, the true label of corresponding data point could be inferred.
\end{theorem}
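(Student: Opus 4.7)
The plan is to exploit the explicit formula for the gradient in the binary-classification XGBoost setting (Eq.~\eqref{eq:grad}) and to observe that the sign of $g_i$ is already a perfect distinguisher of the two possible label values. Concretely, the proof will proceed by inverting the relation $g_i = \hat{y}_i - y_i$ using the fact that $\hat{y}_i$ is constrained to a strict subinterval of $[0,1]$ while $y_i \in \{0,1\}$ is discrete.

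First I would recall that $\hat{y}_i = \sigma(y_i^{(t-1)}) = 1/(1+e^{-y_i^{(t-1)}})$ and therefore $\hat{y}_i \in (0,1)$ strictly, for any finite previous prediction $y_i^{(t-1)}$. Substituting into Eq.~\eqref{eq:grad} gives
\begin{equation*}
g_i \in (0,1) \ \text{if}\ y_i=0, \qquad g_i \in (-1,0) \ \text{if}\ y_i=1.
\end{equation*}
Hence the two ranges are disjoint, and the sign of $g_i$ alone determines $y_i$ uniquely: an adversary that observes $g_i$ decides $y_i=0$ whenever $g_i>0$ and $y_i=1$ whenever $g_i<0$. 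This constitutes an exact inference of the label from the gradient.

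The second step is to make explicit why this constitutes a genuine privacy leak and not merely a syntactic restatement. I would point out that $\hat{y}_i$ is the model's soft prediction and is not assumed to be known to the adversary; nevertheless, because $\hat{y}_i$ lies in the open unit interval irrespective of its actual value, the adversary does not need to know $\hat{y}_i$ to recover $y_i$. Moreover, once $y_i$ is pinned down, the exact value $\hat{y}_i = g_i + y_i$ is also recovered, which can be combined with the Hessian $h_i = \hat{y}_i(1-\hat{y}_i)$ in Eq.~\eqref{eq:hess} as a consistency check.

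The only subtle point is the degenerate case $g_i=0$, which would require $\hat{y}_i \in \{0,1\}$, i.e., $y_i^{(t-1)}\in\{-\infty,+\infty\}$; this does not occur in practice with finite-precision boosted predictions, so the classification is sharp with probability one. I do not expect any real obstacle: the theorem is essentially an observation about the sign of $g_i$ under the logistic loss, and the proof is a one-line case analysis. The purpose of stating it as a theorem is to motivate the need for the encryption and differential-privacy protections introduced later in PIVODL, rather than to establish a mathematically deep fact.
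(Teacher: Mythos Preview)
Your argument is correct and in fact sharper than the paper's own proof. The paper simply inverts Eq.~\eqref{eq:grad} to write $y_i=\hat{y}_i-g_i$, implicitly assuming the adversary already has access to the current soft prediction $\hat{y}_i$; from there the label follows trivially. You instead exploit the structural constraint $\hat{y}_i\in(0,1)$ to show that the \emph{sign} of $g_i$ alone suffices, so the adversary recovers $y_i$ even without knowing $\hat{y}_i$. Your route is slightly longer but removes an unstated assumption and makes precise why the degenerate case $g_i=0$ cannot arise; the paper's route is a one-line algebraic inversion that leans on the surrounding protocol (where predictions are indeed shared) to justify that $\hat{y}_i$ is available. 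Either way the conclusion is immediate, and your additional remarks about recovering $\hat{y}_i$ and cross-checking against $h_i$ are consistent with, but go beyond, what the paper states.
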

\begin{proof}
According to Eq. \eqref{eq:grad}, adversarial attackers are able to infer the true labels of the $i$-th data sample by $y_{i}=\hat{y_{i}}-g_{i}$ if $g_{i}$ is known.
\end{proof}

To tackle this privacy issue, Paillier encryption is adopted in this work to encrypt all $\sum g$ and $\sum h$ for each bucket before sending them for summation aggregation. However, it is possible that the specific gradient information $g_{i}$ can still be inferred by a differential attack, if the clients have data ID information of its adjacent bucket splits. 
\begin{theorem}
There is still a high risk that the gradient $g_{i}$ of the $i$-th data sample may be deduced with a differential attack, even if the intermediate $\sum g$ and $\sum h$ are encrypted.
\end{theorem}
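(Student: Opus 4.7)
The plan is to instantiate the informal attack of Section~\ref{subsec:splitleak} as a concrete chain of ciphertext operations, so that the encryption wrapper is never ``broken'' and the only cryptographic assumption invoked is the additively homomorphic property of Paillier. First I would fix a feature $j$ together with two adjacent bucket thresholds $v_1 < v_2$ for which exactly one data sample $i^{*}$, held by a victim client, satisfies $v_1 \le x_{i^{*},j} < v_2$. By the definition of the left-branch sums $G_{j,v}$ one then has $G_{j,v_2} - G_{j,v_1} = g_{i^{*}}$, and similarly $H_{j,v_2} - H_{j,v_1} = h_{i^{*}}$, so the whole question reduces to whether an adversarial source client can actually obtain this difference in plaintext.

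Next I would exploit the homomorphic identity $\mathrm{Enc}(m_1)\cdot\mathrm{Enc}(m_2) = \mathrm{Enc}(m_1+m_2)$: without the secret key the adversary can already collapse the two collected ciphertexts into $\langle G_{j,v_2} \rangle \cdot \langle G_{j,v_1} \rangle^{-1} = \langle g_{i^{*}} \rangle$. Plaintext recovery then follows from the observation that evaluating \eqref{eq:lsplit} demands decrypted $G_{j,v}$ and $H_{j,v}$ for every candidate split. In a baseline SecureBoost-style scheme, the same party that aggregates bucket sums is also the one that sees these plaintexts in order to pick the best split, so it simply reads off $G_{j,v_1}$ and $G_{j,v_2}$ and subtracts to recover $g_{i^{*}}$. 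Invoking Theorem~1 on the exposed $g_{i^{*}}$ converts it into the true label $y_{i^{*}}$, which is the conclusion we need.

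The main obstacle I anticipate is justifying that a singleton-difference bucket is not a pathological corner case but is effectively unavoidable during training. I would address this with two remarks: recursive node splitting rapidly shrinks the instance count $|I|$ at deeper nodes, so with $b$ bucket thresholds one almost surely encounters a boundary that separates a single sample; and even when a bucket contains several samples, the same subtraction applied between a parent node and one of its children---as depicted in Fig.~\ref{fig:da}---isolates the complementary subset and iteratively peels off individual gradients. Together these observations show that the differential attack succeeds with overwhelming probability whenever encryption is applied only to the per-bucket sums, which is exactly the scenario the theorem asserts to be vulnerable.
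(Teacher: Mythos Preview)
Your proposal is correct and follows essentially the same approach as the paper: both identify two adjacent bucket sums differing by a single sample and recover $g_{i^*}$ by subtraction, with the paper's proof being a one-line instance of your argument. Your homomorphic ciphertext manipulation is superfluous since you ultimately rely on the decrypted sums being available for evaluating $L_{split}$, but the additional justification you give for why singleton-difference buckets arise naturally during recursive splitting is a useful elaboration the paper leaves implicit.
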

\begin{proof}

Assume $L_{n}$ and $L_{{n}'}$ are two adjacent data sample sets for two possible splits that only differs in one data sample $i$ (or a few data samples). Although the client only knows the aggregated summation $\sum_{i\in L_{n}}g_{i}$ and $\sum_{i\in L_{{n}'}}g_{i}$, $g_{i}$ can be easily derived with a differential attack $g_{i}=\sum_{i\in L_{n}}g_{i} - \sum_{i\in L_{{n}'}}g_{i}$.
\end{proof}

Therefore, it is critical to ensure that the source client and split client perform the bucket split and aggregation separately. And then the source client only knows the data IDs of each feature split but has no knowledge of their corresponding summation values, the split client holds the summation values but does not know their data IDs.


Before building the next boosting tree, the predictions of all data samples need to be updated based on the corresponding leaf weights. It will not be surprising that the predicted labels are the same as the true labels as the training proceeds. Therefore, attackers can guess the correct labels with a high confidence level through the received leaf weights. 
\begin{theorem}
Given multiple boosting trees' leaf values, it is possible to deduce the true labels of the corresponding data points.
\end{theorem}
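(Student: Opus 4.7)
The plan is to show that once an attacker collects the leaf values that a given sample $x_i$ is routed to in every boosting tree, summing (and, for classification, pushing through the sigmoid) reproduces the model's own prediction $\hat{y}_i^{(T)}$, and this prediction converges to the true label $y_i$ as boosting proceeds. So the reduction is essentially: ``all leaf values $\Longrightarrow$ the model's final prediction $\Longrightarrow$ the true label.''

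First, I would invoke the definition of the XGBoost ensemble prediction. After $T$ trees, the raw score for sample $i$ is $s_i^{(T)} = \sum_{t=1}^{T} \eta\, w_{t}(x_i)$, where $w_t(x_i)$ is the leaf weight assigned by tree $t$ to sample $x_i$ and $\eta$ is the learning rate. For regression $\hat{y}_i^{(T)} = s_i^{(T)}$, and for logistic classification $\hat{y}_i^{(T)} = \sigma(s_i^{(T)}) = 1/(1+e^{-s_i^{(T)}})$. Hence, if an adversary knows the complete list of leaf values $\{w_t(x_i)\}_{t=1}^{T}$, it can evaluate $\hat{y}_i^{(T)}$ in closed form without any additional information.

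Second, I would argue that $\hat{y}_i^{(T)}$ is arbitrarily close to $y_i$ once $T$ is large enough. By construction, each round greedily minimizes the regularized second-order approximation of the training loss, so $L(y_i, \hat{y}_i^{(T)})$ is monotonically non-increasing in $T$ (in expectation, under the standard boosting analysis). Combined with Eq. \eqref{eq:grad}, which gives $g_i = \hat{y}_i - y_i$ for the logistic loss (and analogously the residual $\hat{y}_i - y_i$ for squared loss), the training procedure drives $\hat{y}_i^{(T)} \to y_i$. For classification, since $y_i \in \{0,1\}$ and $\hat{y}_i^{(T)} \in (0,1)$, rounding $\hat{y}_i^{(T)}$ to the nearest integer recovers $y_i$ exactly whenever the training error on sample $i$ is below $\tfrac{1}{2}$, which is guaranteed on the training set after sufficiently many rounds. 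For regression, $|\hat{y}_i^{(T)}-y_i|$ is bounded by the residual training error, which likewise shrinks with $T$.

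The main obstacle is that the claim is inherently approximate: XGBoost does not in general drive the training residual to exactly zero (early stopping, regularization via $\lambda,\gamma$, a minimum-instance threshold, and the added DP noise on leaf weights in our protocol all leave some slack). I would therefore phrase the conclusion as ``the adversary reconstructs an estimate $\tilde{y}_i$ of $y_i$ whose error is upper bounded by the current training residual on $x_i$,'' and for the binary classification case strengthen this to exact recovery under the mild assumption that the model already classifies $x_i$ correctly. This is enough to establish the stated possibility of deducing the true labels and, crucially, it motivates the partial DP countermeasure of the previous subsection, because the noise injected into the leaf weights sent to the source client directly inflates the residual that the adversary must overcome.
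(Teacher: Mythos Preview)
Your proposal is correct and takes essentially the same approach as the paper: reconstruct the ensemble prediction from the leaf weights via $\hat{y}=\sigma(\hat{y}_0+\eta W_1+\cdots+\eta W_t)$ and then observe that this prediction approximates the true label. The paper's own argument is actually terser than yours---it simply states the formula and asserts a ``high risk'' of inference---so your convergence discussion and explicit handling of the approximate nature are additional rigor rather than a different route.
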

\begin{proof}
The label predictions $\hat{y}$ in a classification problem can be calculated  with equation $\hat{y}= \sigma(\hat{y_{0}}+\eta\cdot W_{1} + \cdots + \eta\cdot W_{t})$, where $\eta$ refers to the learning rate, $W_{t}$ is the leaf weight of $t$-th decision tree, and $\sigma$ is the activation function. Therefore, once the source client knows parts of or even all the leaf values $W$, the true labels have a high risk of being inferred.
\end{proof}

In order to address this issue, the split client applies the above-mentioned partial DP mechanism before sending the leaf values to the source client. In this way, we can reduce the risk for the source client to correctly guess the true labels from other clients.



\section{Experimental results}
In this section, we first provide the experimental settings and then present the experimental results, followed by a discussion of the learning performance. After that, we evaluate the training time when implementing our privacy-preservation method. At the end, the communication cost and label prediction inference of PIVODL would be described.

\subsection{Experimental Settings}
Three common public datasest are used in the experimental studies, where the first two are classification tasks and the third is a regression task.

\textbf{Credit card} \cite{yeh2009comparisons}: It is a credit scoring dataset that aims to predict if a person will make payment on time. It contains in total 30,000 data samples with 23 features. 

\textbf{Bank marketing} \cite{moro2014data}: The data is related to direct marketing campaigns of a Portuguese banking institution. The prediction goal is to evaluate whether the client will subscribe a term deposit. It consists of 45211 instances and 17 features.

\textbf{Appliances energy prediction} \cite{candanedo2017data}: It is a regression dataset of energy consumption in a low energy building, which has 19735 data instances and 29 attributes. 

We partition each dataset into training data and test data. The training and test dataset occupies 80\% and 20\% of the entire data samples, respectively. Other experimental settings like the number of clients, the number of the maximum depth of the boosting tree are presented in Table \ref{t:settings}. Besides, each experiment is repeated for five times independently and the results of the mean values together with their standard deviations are plotted in the corresponding figures.

\begin{table}[]
\centering
\caption{Experimental settings of the PIVODL system}
\begin{tabular}{c|c|c}
\toprule
Hyperparameters   & Range            & Default \\
\toprule
Number of clients & {[}2,4,6,8,10{]} & 4       \\
Maximum depth     & {[}2,3,4,5,6{]}  & 3       \\
Number of trees   & {[}2,3,4,5,6{]}  & 5       \\
Learning rate     & /                & 0.3     \\
Regularization    & /                & 1       \\
Number of buckets & /                & 32      \\
Encrypt key size  & /                & 512     \\
$\epsilon$           & {[}2,4,6,8,10{]} & 8       \\
Sensitivity clip  & /                & 2       \\
Sample threshold  & /                & 10    \\
\bottomrule
\end{tabular}
\label{t:settings}
\end{table}

\subsection{Sensitivity analysis}
Here, we empirically analyze the change of the learning performances as the number of participating clients, the maximum depth of the tree structure, the number of trees, and $\epsilon$ in the partial DP change. The results in terms of the mean, the best, and the worst performance over five independent runs are plotted in Fig. \ref{fig:performclients}, \ref{fig:performdepth}, \ref{fig:performtrees}, and Fig. \ref{fig:performanceeps}, respectively. 

From the results, we can see that varying the number of clients does not affect the training performance very much, especially in the experiments without applying the partial DP. As shown in Fig. \ref{fig:performclients}, it is clear to see that the test performances on these three datasets are relatively insensitive to different numbers of clients. These results make sense since non-parametric models trained in the VFL setting have nearly the same performance as those trained in the standard centralized learning \cite{zhu2021federated}. By contrast, however, the average test accuracy on the Credit dataset drops from 0.82 to 0.8 when the number of clients is reduced to 2. And it is surprising to see that the root mean squared error (RMSE) on the test data of the Energy dataset with DP is slightly higher than those without DP over different number of clients. The reason for this is that the model may have overfit the training data without applying DP. For instance, when the number of clients is 4 (the default value), the training RMSE with DP is about 70.9, which is higher than that without DP (66.6). By contrast, the test RMSE with DP is approximately 0.8\% lower than that without DP. Overall, the performances with DP and without DP are almost the same and the test performance degradation resulting from the DP is negligible. Besides, the performance of the proposed PIVODL algorithm is rather stable in different runs.

\begin{figure}[!t]
\begin{minipage}[t]{1\linewidth}
\centering
\subfigure[Credit card and Bank marketing]{
\begin{minipage}[b]{0.46\textwidth}
\includegraphics[width=1\textwidth]{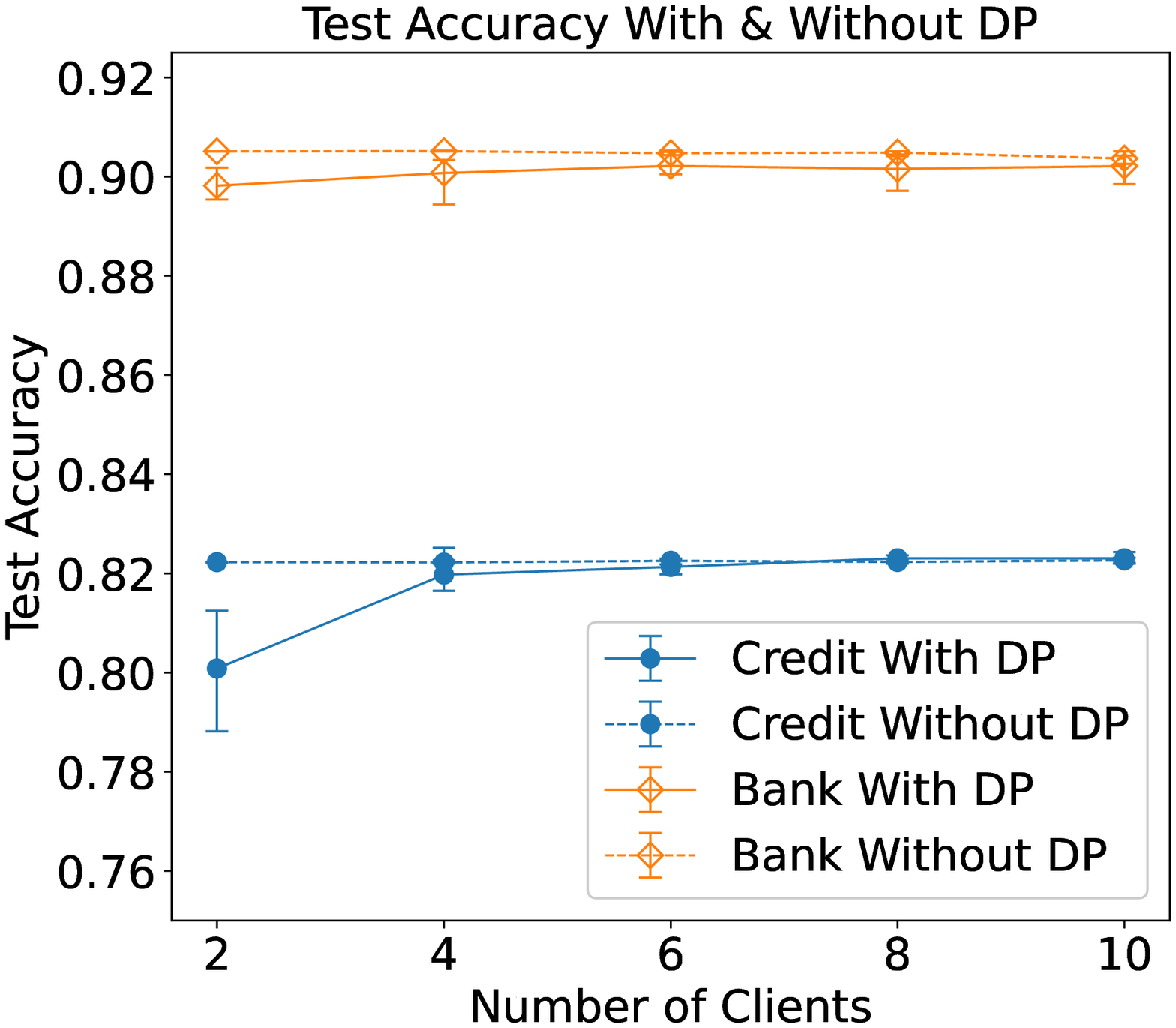}
\end{minipage}
}
\centering
\subfigure[Appliances energy prediction]{
\begin{minipage}[b]{0.46\textwidth}
\includegraphics[width=1\textwidth]{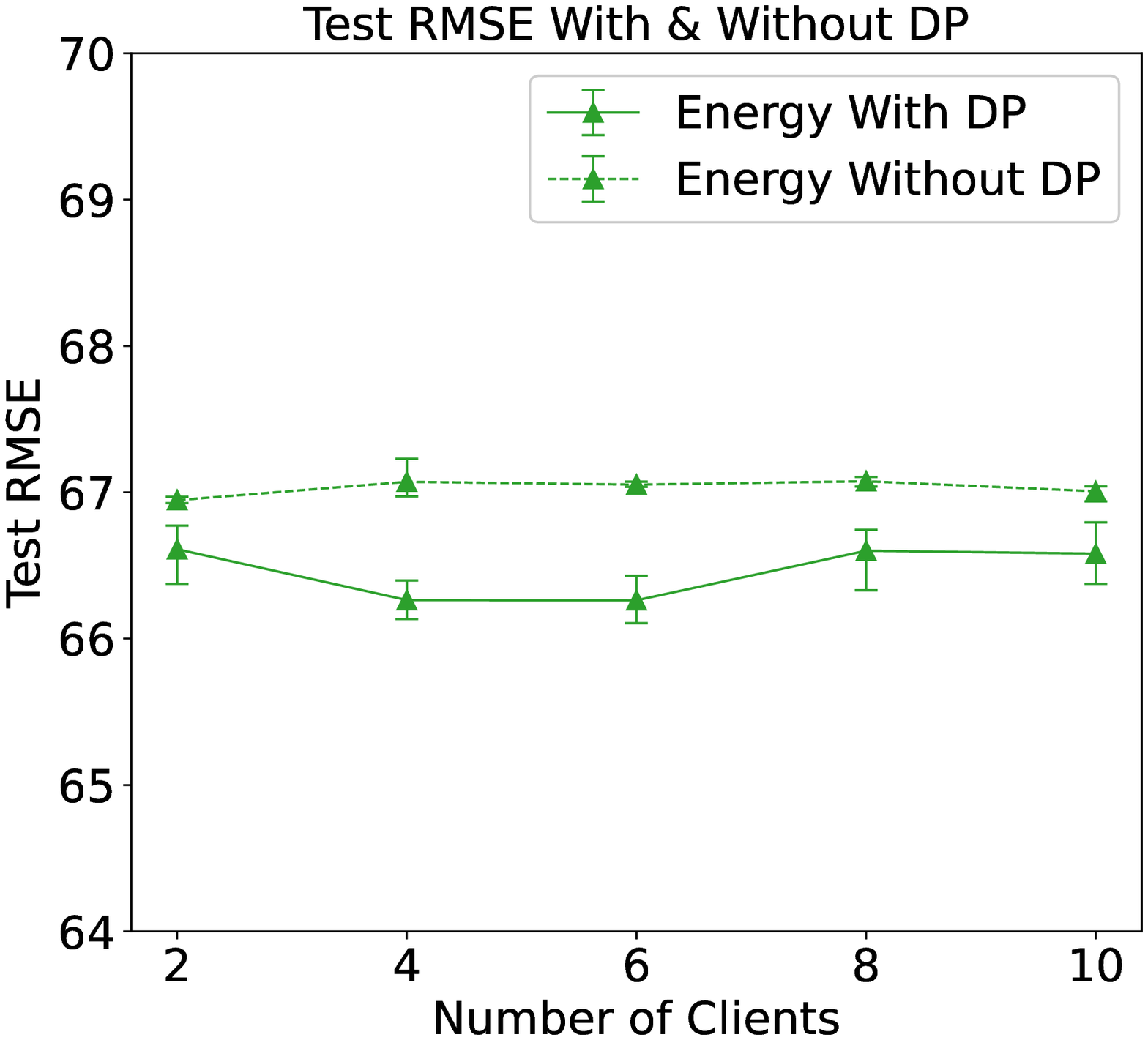}
\end{minipage}
} \\
\caption{The test accuracy and RMSE with and without applying DP over different numbers of participating clients, where (a) is the test accuracy on Credit card and Bank marketing datasets, and (b) is the test RMSE on the regression dataset.}
\label{fig:performclients}
\end{minipage}
\end{figure}

Fig. \ref{fig:performdepth} shows the performance when varying the number of maximum depth of the boosting tree. For classification tasks, the accuracy is about 82$\%$ and 90$\%$ for Credit card and Bank marketing datasets, respectively. And test accuracy of these two datasets remains almost constant as the number of maximum depth increases, except that the results on the Credit dataset with DP have slight fluctuations when the number of maximum depth is 5. Moreover, it is easy to find that the performances with DP and without DP are almost the same, which shows that the impact of the partial DP on the model performance is negligible. On the regression task, the average RMSE decreases as the maximum depth increases, since more complex tree structures may have better model performance. Similar to the previous cases, the test performance with DP is better than that without DP because of over-fitting.


\begin{figure}[!t]
\begin{minipage}[t]{1\linewidth}
\centering
\subfigure[Credit card and Bank marketing]{
\begin{minipage}[b]{0.46\textwidth}
\includegraphics[width=1\textwidth]{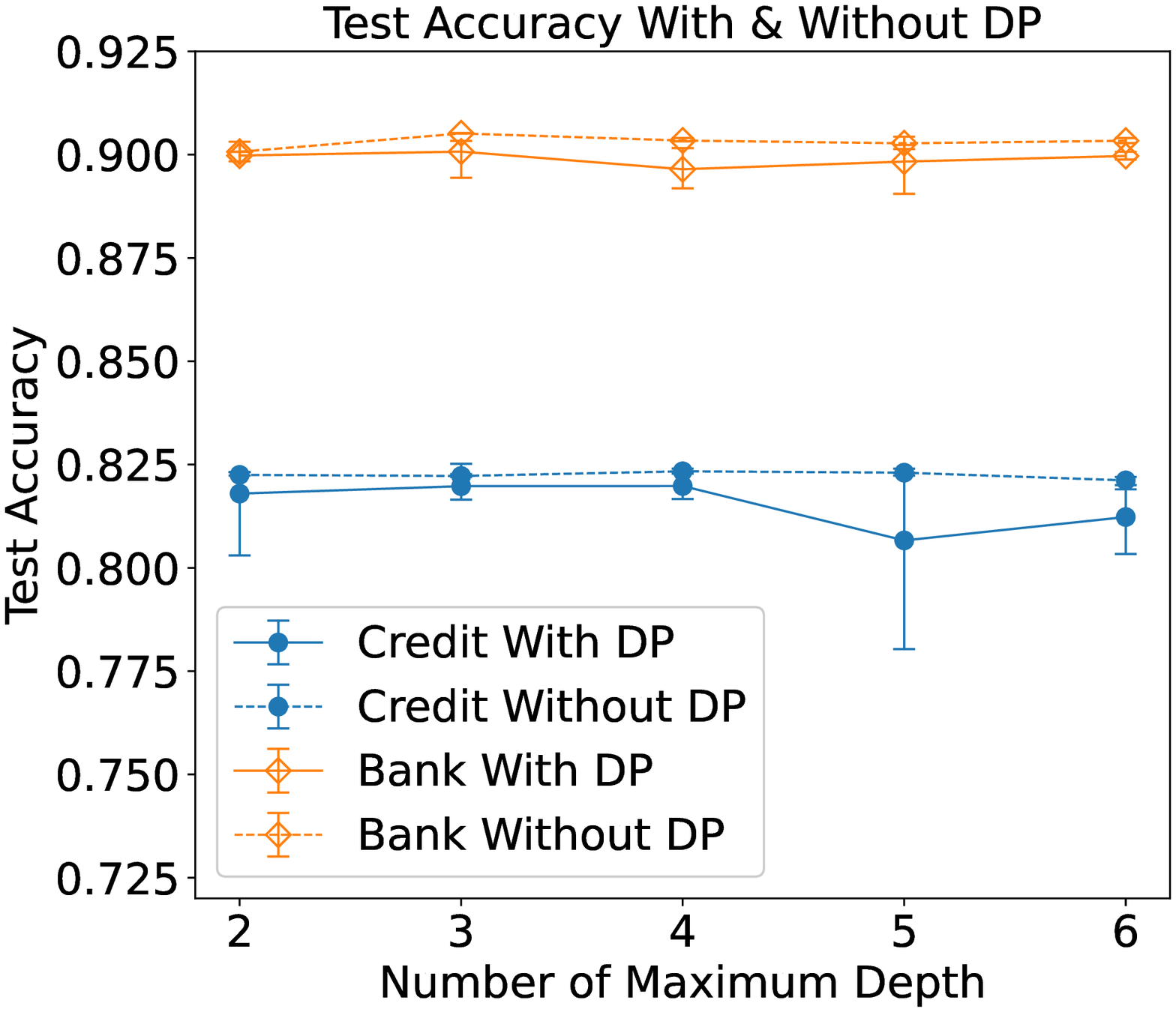}
\end{minipage}
}
\centering
\subfigure[Appliances energy prediction]{
\begin{minipage}[b]{0.46\textwidth}
\includegraphics[width=1\textwidth]{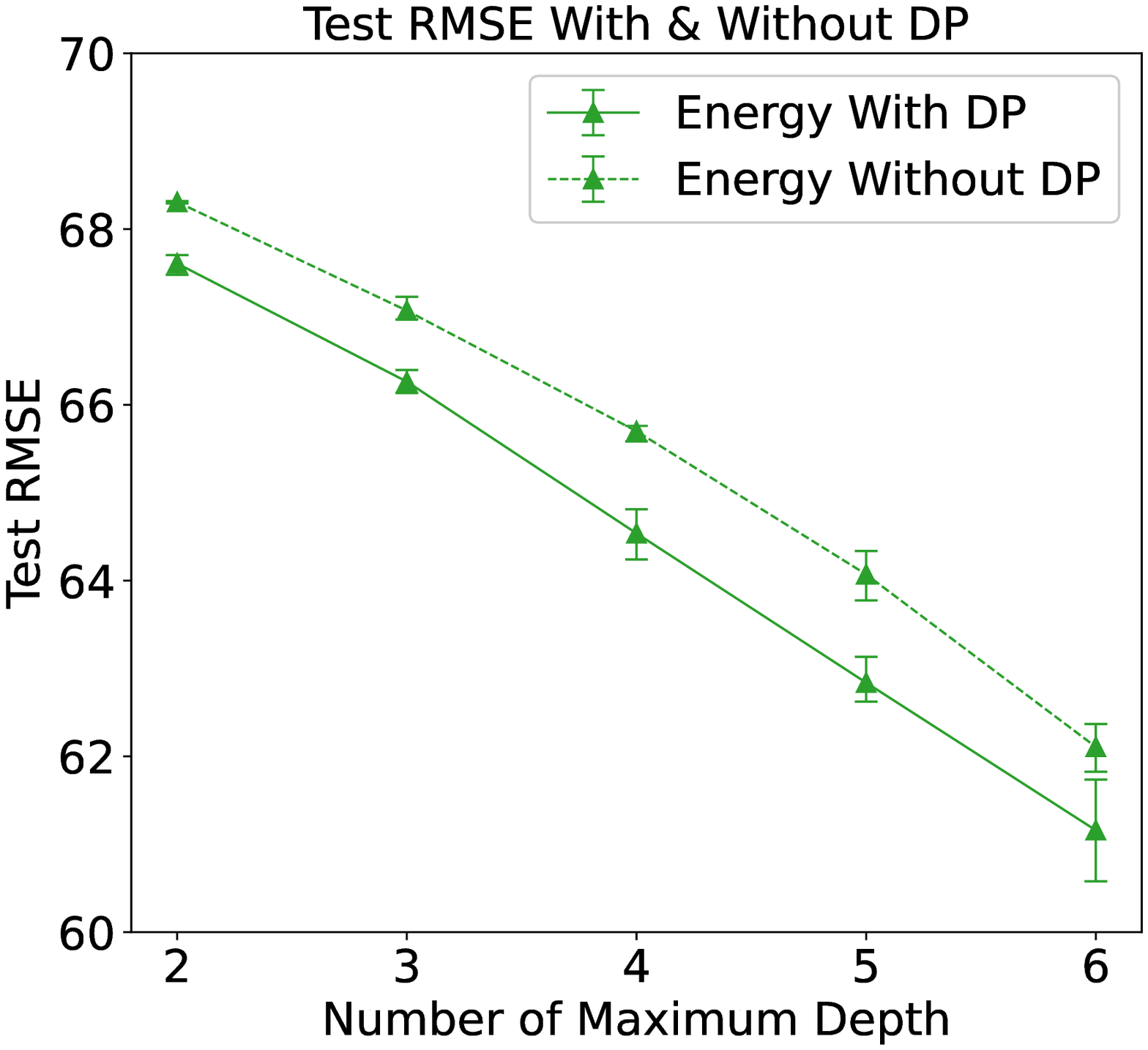}
\end{minipage}
} \\
\caption{The test accuracy and RMSE with and without applying DP over different maximum depths of a boosting tree, where (a) is the test accuracy on the Credit card and Bank marketing datasets, and (b) is the test RMSE on the regression dataset.}
\label{fig:performdepth}
\end{minipage}
\end{figure}

The performance change with the number of boosting trees is shown in Fig. \ref{fig:performtrees}. The results imply that only slight performance changes are observed when the number of boosting trees changes. 
On the Appliances energy prediction dataset, the RMSE slightly decreases with the increase in the number of boosting trees and the RMSE values with and without DP are nearly the same.

\begin{figure}[!t]
\begin{minipage}[t]{1\linewidth}
\centering
\subfigure[Credit card and Bank marketing]{
\begin{minipage}[b]{0.46\textwidth}
\includegraphics[width=1\textwidth]{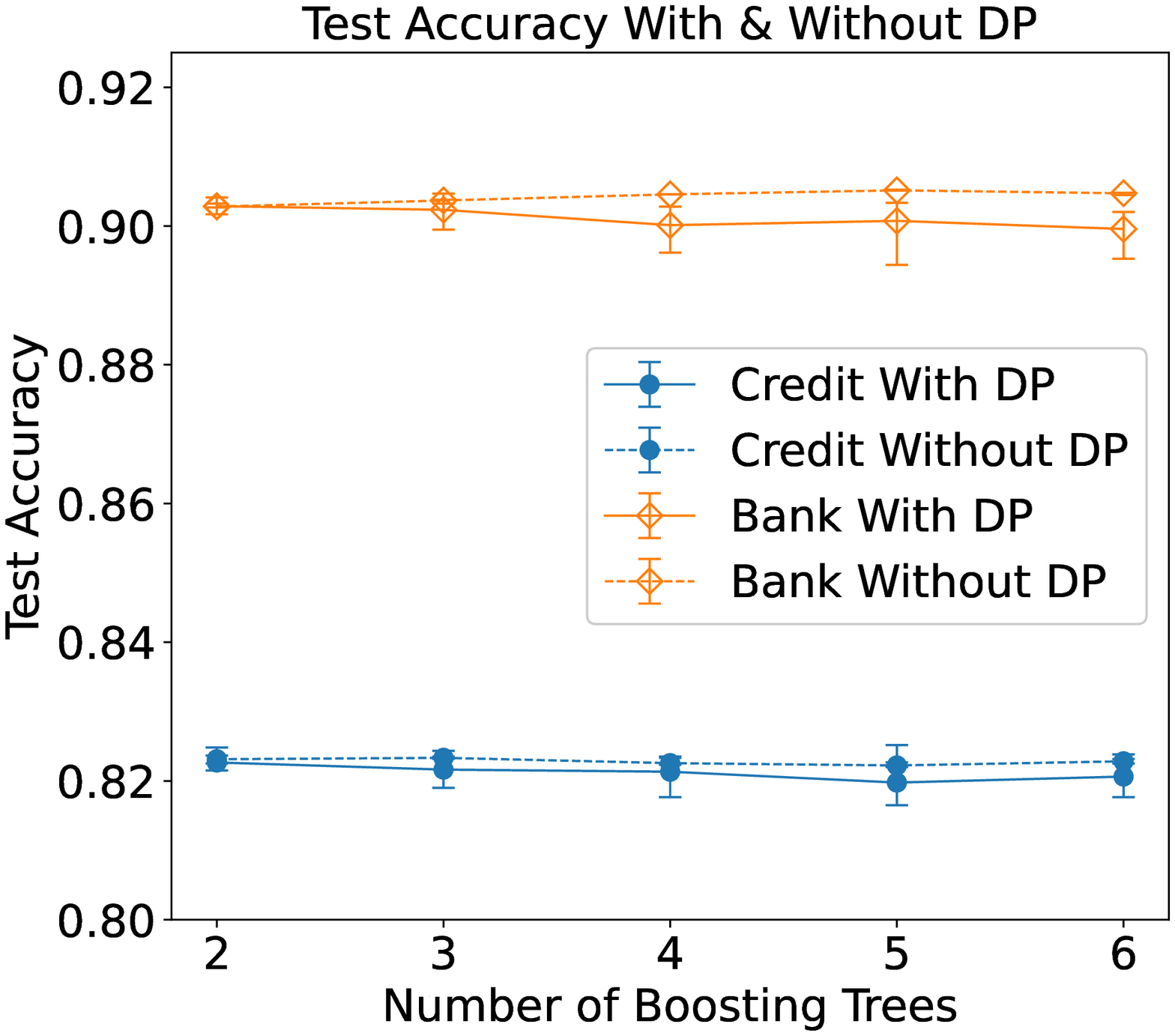}
\end{minipage}
}
\centering
\subfigure[ Appliances energy prediction]{
\begin{minipage}[b]{0.46\textwidth}
\includegraphics[width=1\textwidth]{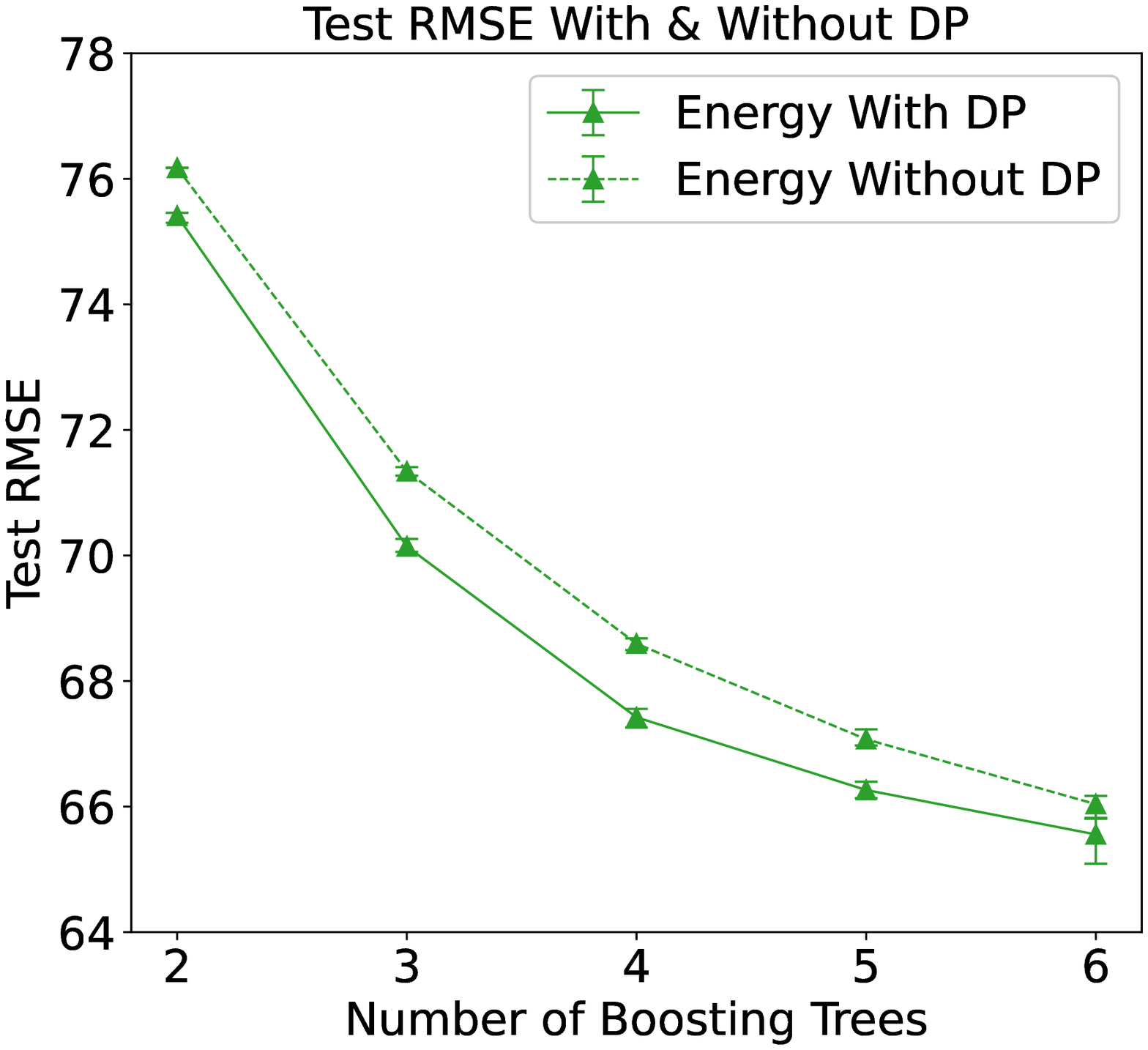}
\end{minipage}
} \\
\caption{The test accuracy and RMSE with and without applying DP over different numbers of boosting trees, where (a) is the test accuracy on the Credit card and Bank marketing datasets, and (b) is the test RMSE of the Appliances energy prediction dataset.}
\label{fig:performtrees}
\end{minipage}
\end{figure}

The test results on the three datasets over different $\epsilon$ values are shown in Fig.\ref{fig:performanceeps}. And it is surprising to see that the test performances on all three datasets have no clear changes with the decrease of the $\epsilon$ value. This is because the partial DP is applied only on part of predicted labels in the source client, which makes it very unlikely to influence the node split at the next level of node split. In addition, the split clients use de-noised leaf values for prediction during the test, which further reduces the performance biases caused by the proposed partial DP method. 

\begin{figure}[!t]
\begin{minipage}[t]{1\linewidth}
\centering
\subfigure[Test accuracy with different $\epsilon$]{
\begin{minipage}[b]{0.46\textwidth}
\includegraphics[width=1\textwidth]{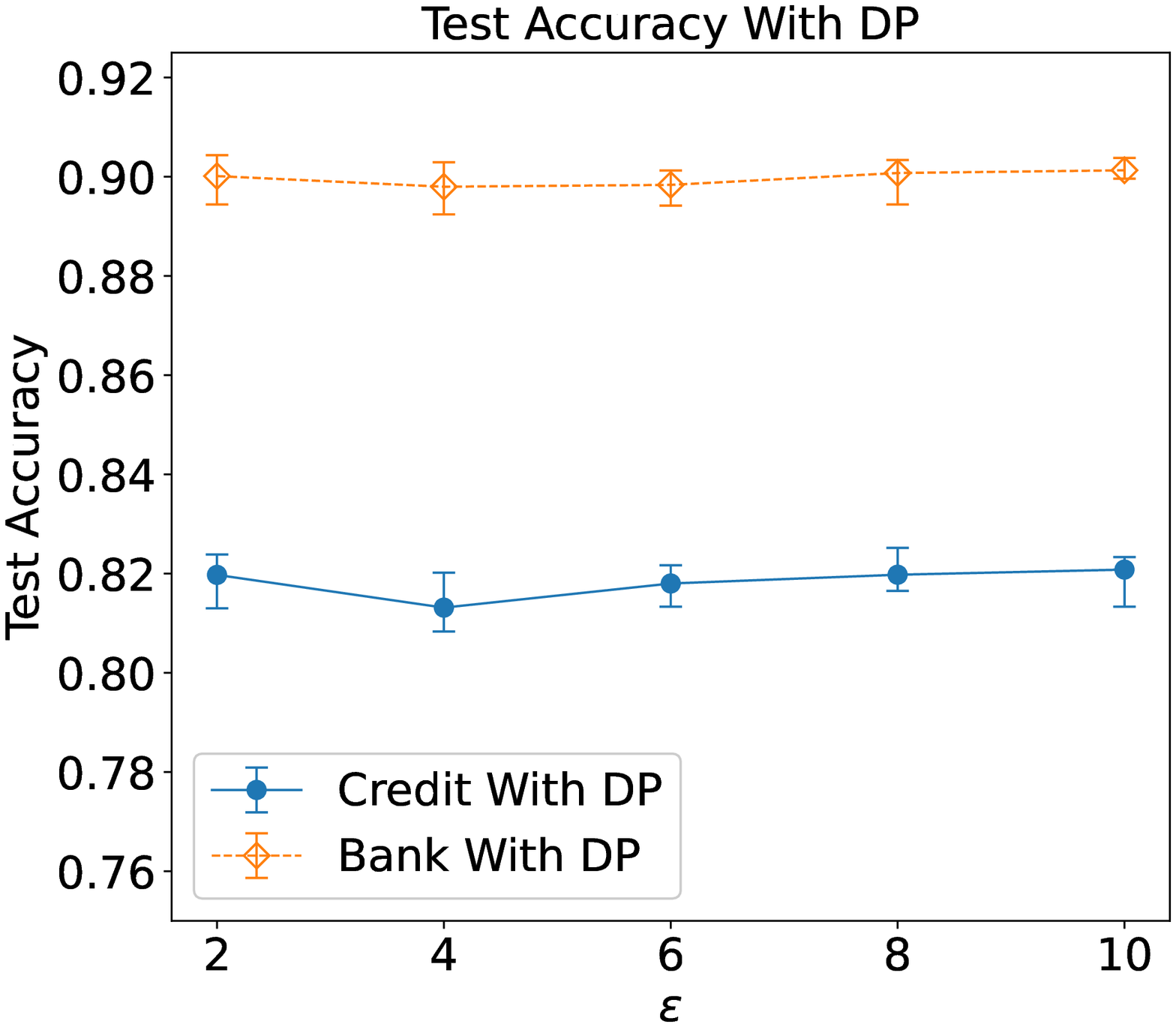}
\end{minipage}
}
\centering
\subfigure[Test RMSE with different $\epsilon$]{
\begin{minipage}[b]{0.46\textwidth}
\includegraphics[width=1\textwidth]{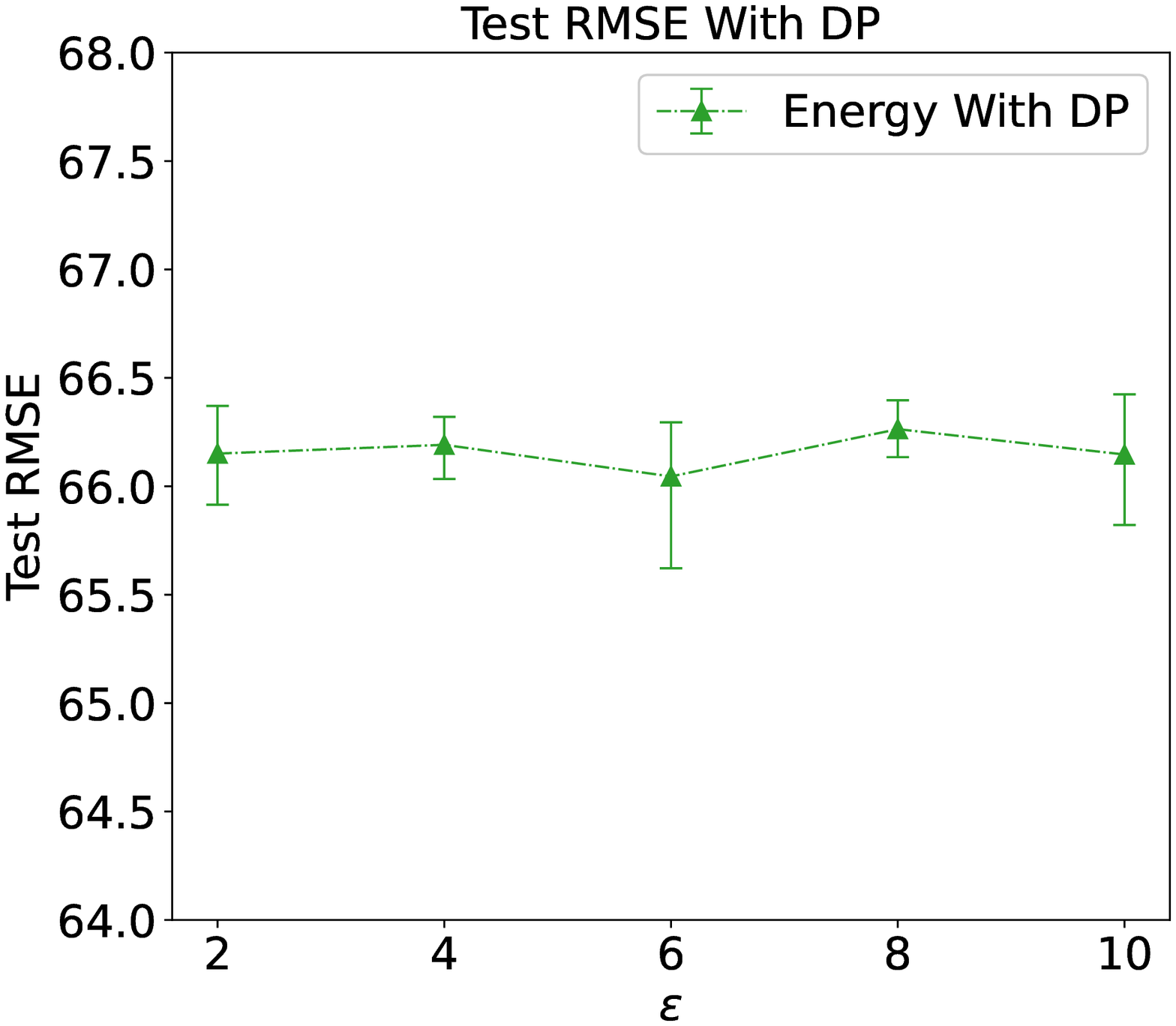}
\end{minipage}
} \\
\caption{The test performance over different $\epsilon$ values, where (a) is the test accuracy on the Credit card and Bank marketing datasets, and (b) is the test RMSE on Appliances energy prediction dataset.}
\label{fig:performanceeps}
\end{minipage}
\end{figure}

\subsection{Evaluation on training time}
In this section, we empirically analyze the training time affected by the number of participating clients, the number of maximum depth and the number of boosting trees. 
All the experiments are run on Intel Core i7-8700 CPU and the Paillier encryption system is implemented with the package in \cite{PythonPaillier}. Just like what we did in the previous set of experiments, the mean values together with the best and worst performances out of five independent runs are included in the results of the ablation studies.

Fig. \ref{fig:timeclients} shows the training time when varying the number of clients with and without encryption. In general, it consumes the least training time on the the bank marketing dataset since it has the smallest amount of data. When no parameter encryption is adopted, the training time of the three datasets almost keeps constant over different numbers of clients, which is consistent with the results reported in \cite{cheng2021secureboost,tian2020federboost}. The reason is that the data samples are partitioned across data features in VFL and setting different numbers of clients does not change the entire data entries. On the contrary, the training time with Paillier encryption increases linearly with the increase in the client numbers, since the amount of the encryption times is proportional to the number of clients in our proposed PIVODL algorithm.

\begin{figure}[!t]
\begin{minipage}[t]{1\linewidth}
\centering
\subfigure[Training time with Paillier encryption]{
\begin{minipage}[b]{0.46\textwidth}
\includegraphics[width=1\textwidth]{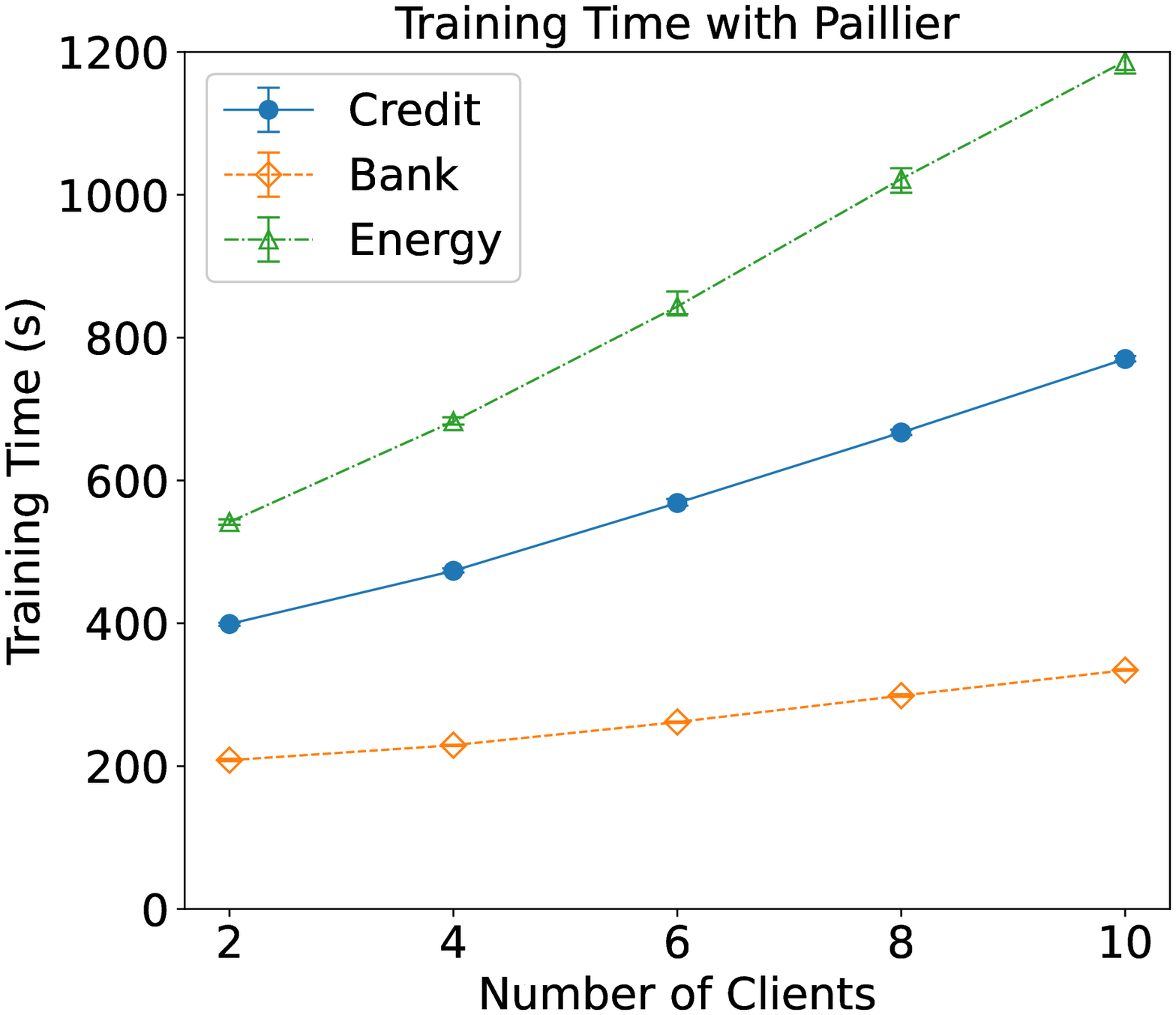}
\end{minipage}
}
\centering
\subfigure[Training time without encryption]{
\begin{minipage}[b]{0.46\textwidth}
\includegraphics[width=1\textwidth]{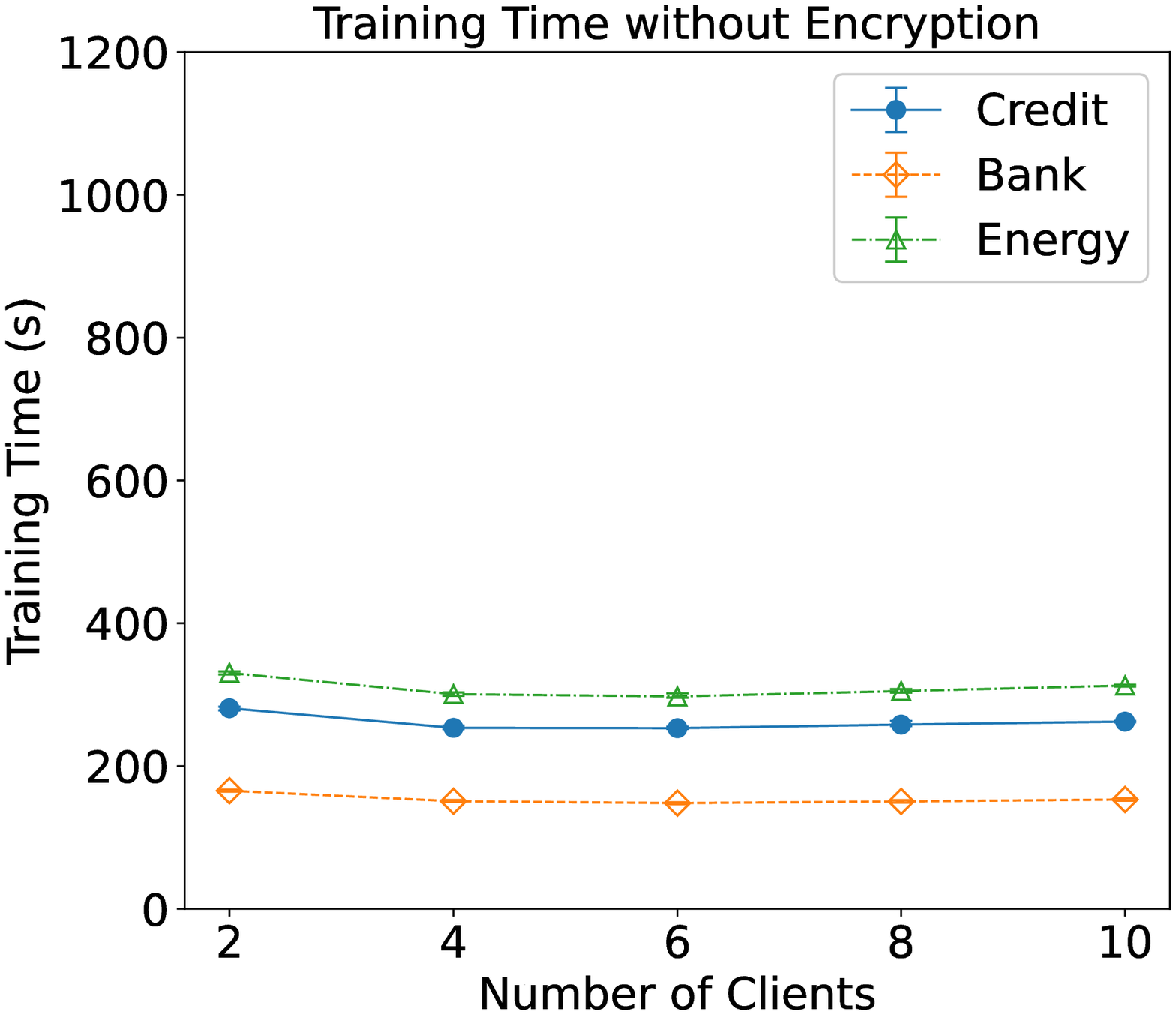}
\end{minipage}
} \\
\caption{The total training time over different numbers of participating clients, where (a) is the training time with Paillier encryption and (b) is the training time without encryption.}
\label{fig:timeclients}
\end{minipage}
\end{figure}

The training times over different maximum depths of a boosting tree are shown in Fig. \ref{fig:timedepth}, and similarly, the cases with and without Paillier encryption are presented for comparison. The training time grows exponentially with the increasing maximum depth, and this phenomenon becomes more obvious with Pailler encryption is adopted. This makes sense since the computational complexity of one decision is $O(2^{n})$, where $n$ is the maximum depth of the tree. More specifically, the training time on the Appliance energy prediction dataset for six different depths is about 2 and 5 minutes without and with encryption, respectively.

\begin{figure}[!t]
\begin{minipage}[t]{1\linewidth}
\centering
\subfigure[Training time with Paillier encryption]{
\begin{minipage}[b]{0.46\textwidth}
\includegraphics[width=1\textwidth]{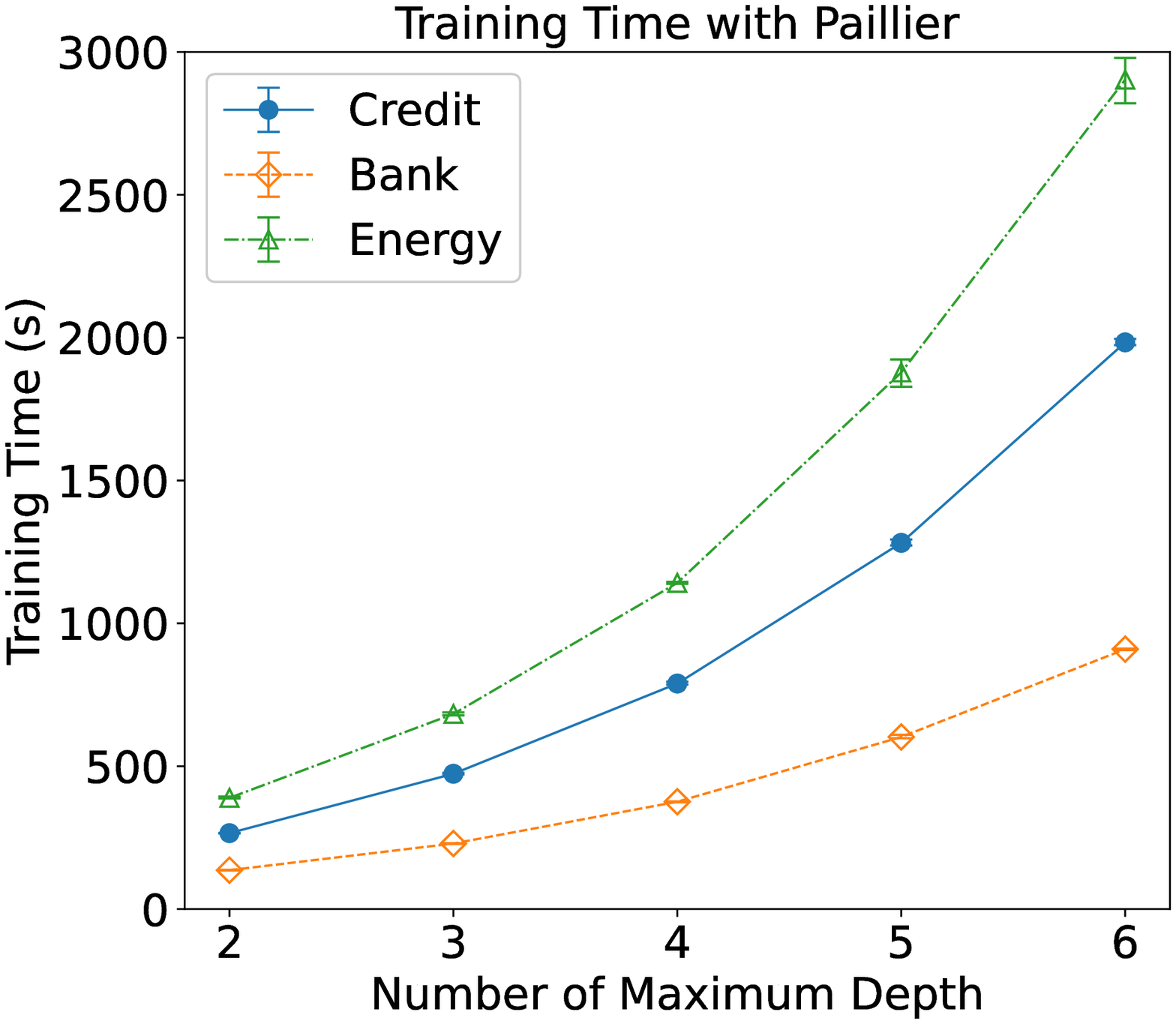}
\end{minipage}
}
\centering
\subfigure[Training time without encryption]{
\begin{minipage}[b]{0.46\textwidth}
\includegraphics[width=1\textwidth]{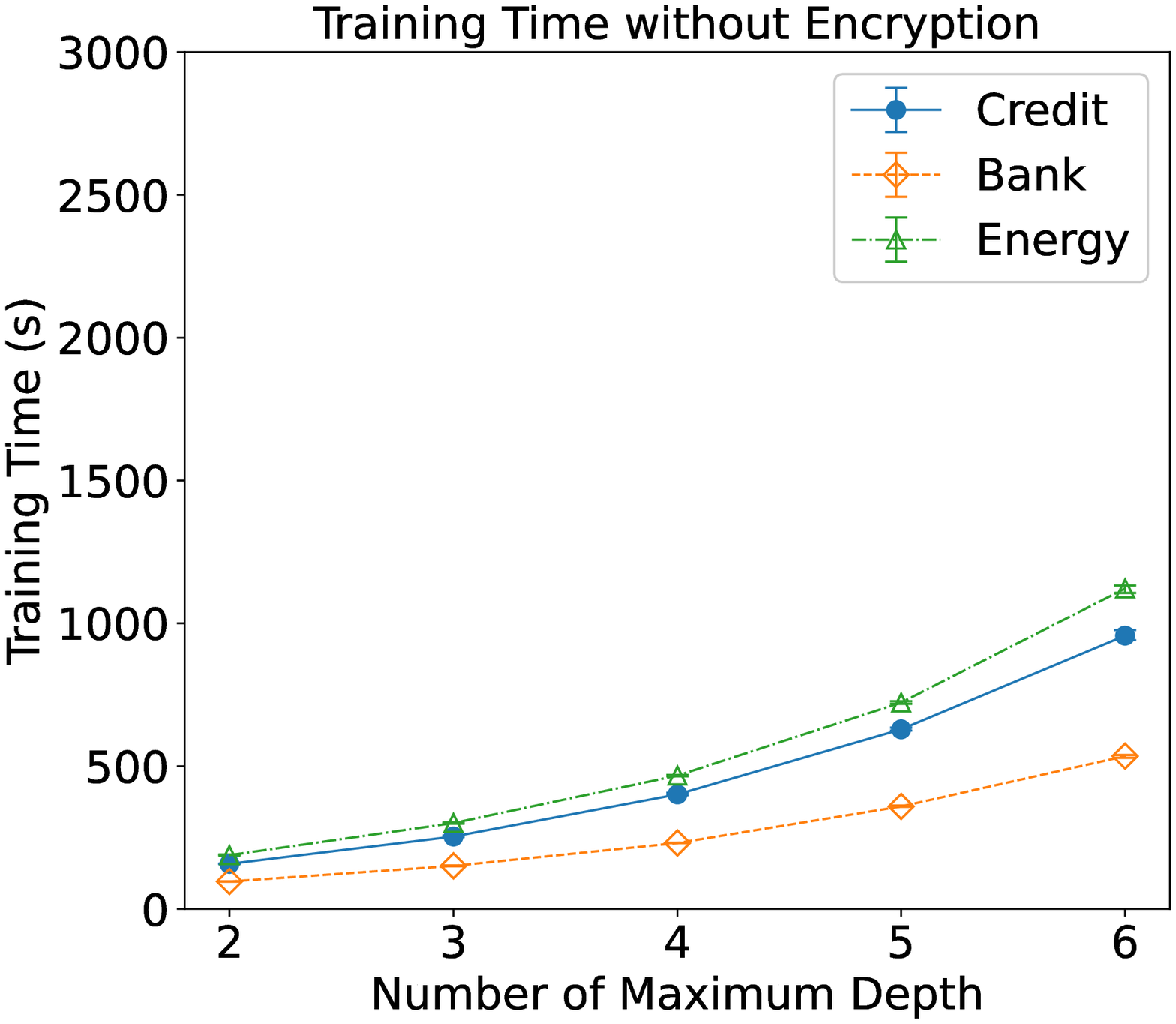}
\end{minipage}
} \\
\caption{The total training time over different maximum depths of a boosting tree, where (a) is the training time with Paillier encryption and (b) is the training time without encryption.}
\label{fig:timedepth}
\end{minipage}
\end{figure}

Fig. \ref{fig:timetrees} indicates the training time consumption over different numbers of boosting trees. The training time increases linearly on the three datasets with increase in the number of boosting trees. Similar to the previous results, the time consumption when using Paillier encryption is much larger than that without encryption. For example, the runtime on the energy dataset for six boosting trees is approximately 800 seconds with encryption, while it takes only about 350 seconds without encryption. 

\begin{figure}[!t]
\begin{minipage}[t]{1\linewidth}
\centering
\subfigure[Training time with Paillier encryption]{
\begin{minipage}[b]{0.46\textwidth}
\includegraphics[width=1\textwidth]{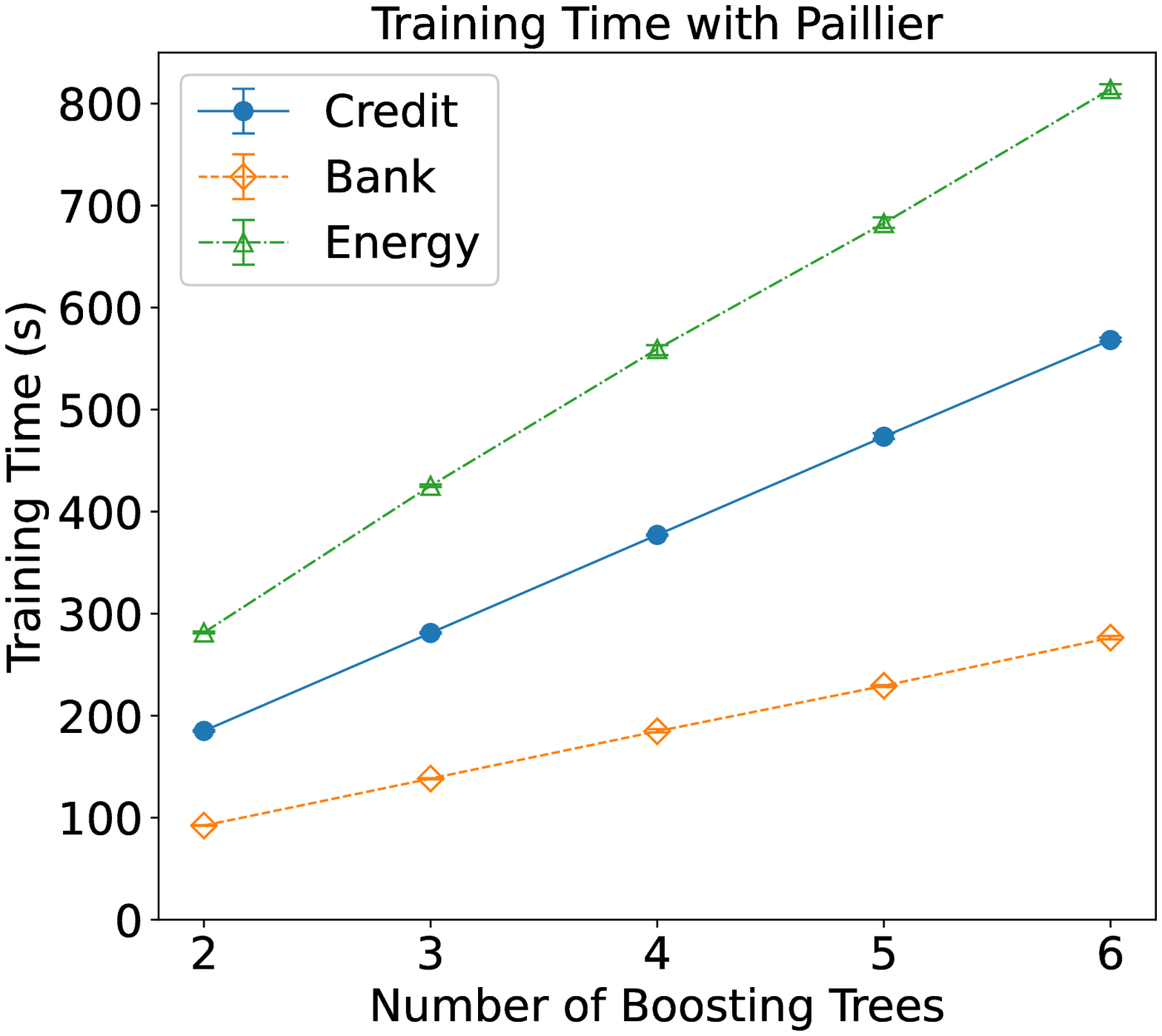}
\end{minipage}
}
\centering
\subfigure[Training time without encryption]{
\begin{minipage}[b]{0.46\textwidth}
\includegraphics[width=1\textwidth]{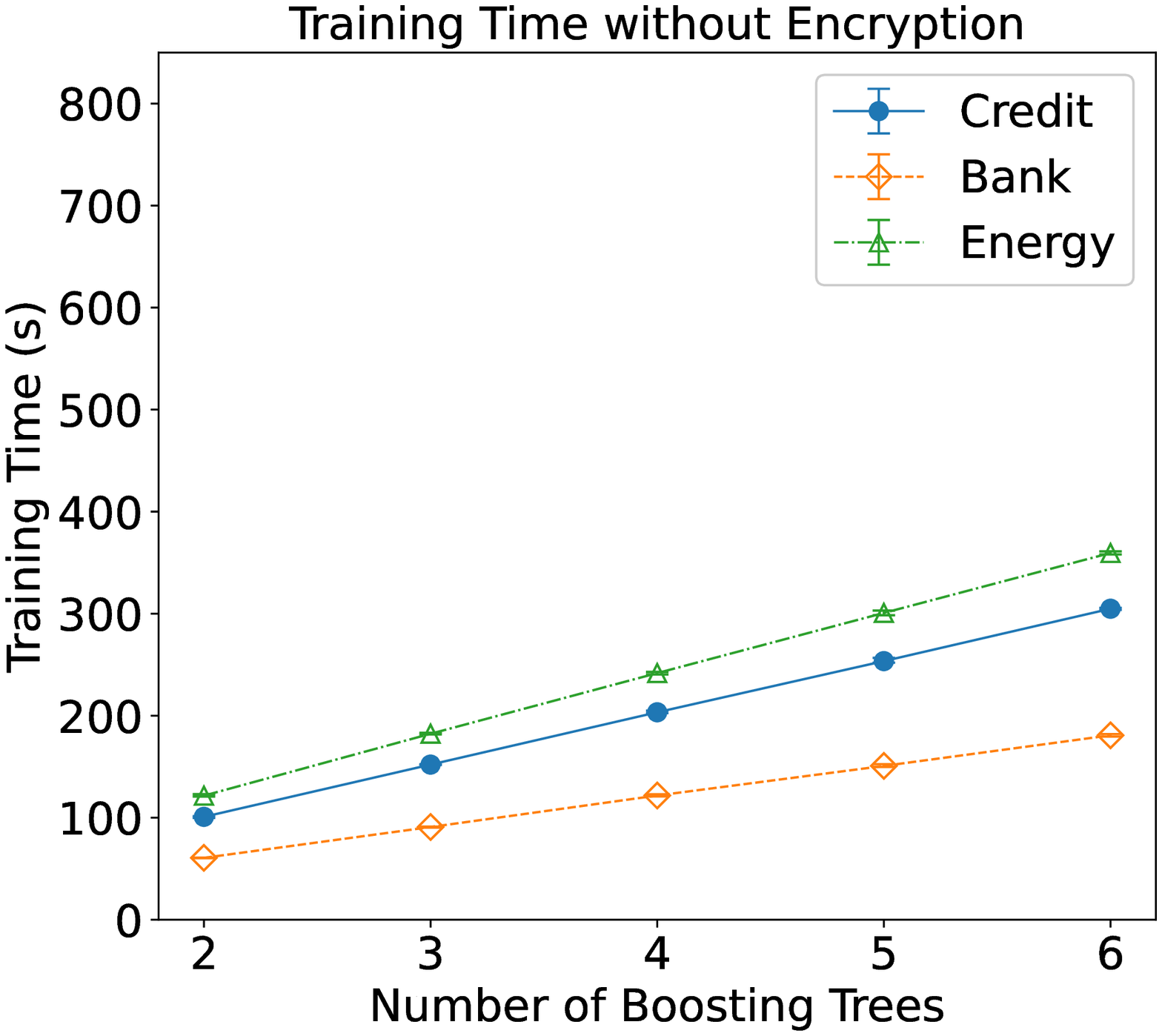}
\end{minipage}
} \\
\caption{The total training time over different number of boosting trees, where (a) is the training time with Paillier encryption and (b) is the training time without encryption.}
\label{fig:timetrees}
\end{minipage}
\end{figure}

\subsection{Evaluation on communication cost}
In this section, we empirically analyze the communication costs (in MB) affected by the number of participating clients, the number of the maximum depth and the number of boosting trees. 

Communication costs are measured by varying the number of clients for both cases with and without using Paillier encryption.
As shown in Fig. \ref{fig:costclient}, the total communication costs for the three datasets increase with the number of clients. More specifically, the communication costs go up dramatically from two clients to four clients, and then increase relatively slightly from four clients to ten clients. Different from the results on the training time, the communication costs with Paillier encryption increase but not as much as the computation time compared to the cases without encryption. This can be attributed to the fact that it is unnecessary to transmit the ciphertexts of the gradients and Hessian values of each data sample for all possible node splits, and only the ciphertexts of the intermediate summation values are required for transmission (see line 27 in Algorithm \ref{alg:securesplit}). 

\begin{figure}[!t]
\begin{minipage}[t]{1\linewidth}
\centering
\subfigure[Communication costs with Paillier encryption]{
\begin{minipage}[b]{0.46\textwidth}
\includegraphics[width=1\textwidth]{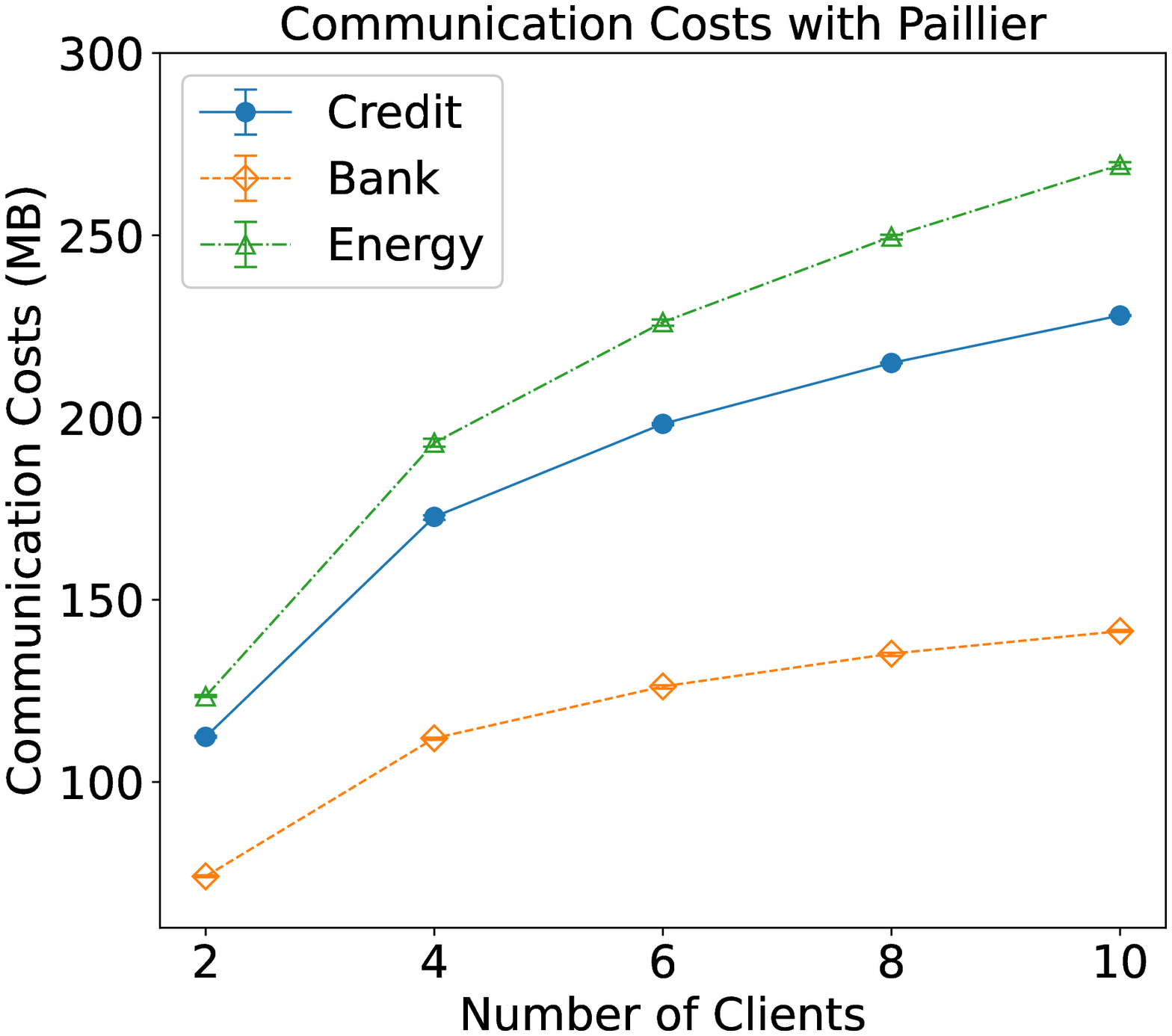}
\end{minipage}
}
\centering
\subfigure[Communication costs without encryption]{
\begin{minipage}[b]{0.46\textwidth}
\includegraphics[width=1\textwidth]{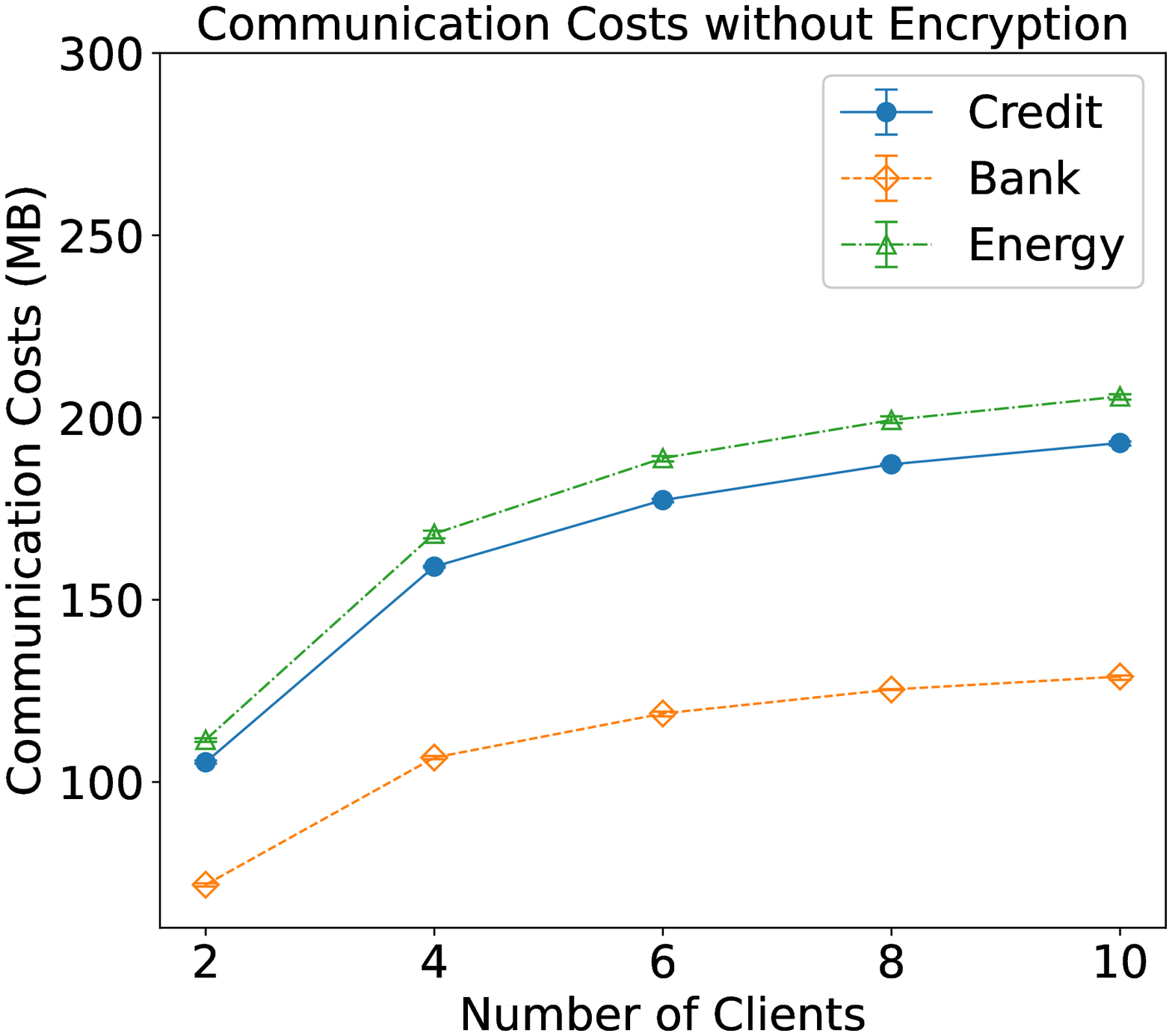}
\end{minipage}
} \\
\caption{The total communication costs over different number of clients, (a) with Paillier encryption, and (b) without encryption.}
\label{fig:costclient}
\end{minipage}
\end{figure}

Fig. \ref{fig:costdepth} shows the communication costs over different maximum depths with and without Paillier encryption. It is apparent to see that the communication costs are proportional to the maximum depth in both encryption and non
-encryption scenarios. The communication costs are similar for the three datasets when the maximum depth is two, and they grow almost linearly with increase in the tree depth. The communication costs of the Appliances energy prediction dataset are 290MB without encryption and 425MB with encryption for a depth of six, which is the largest among all the three datasets. 

\begin{figure}[!t]
\begin{minipage}[t]{1\linewidth}
\centering
\subfigure[Communication costs with Paillier encryption]{
\begin{minipage}[b]{0.46\textwidth}
\includegraphics[width=1\textwidth]{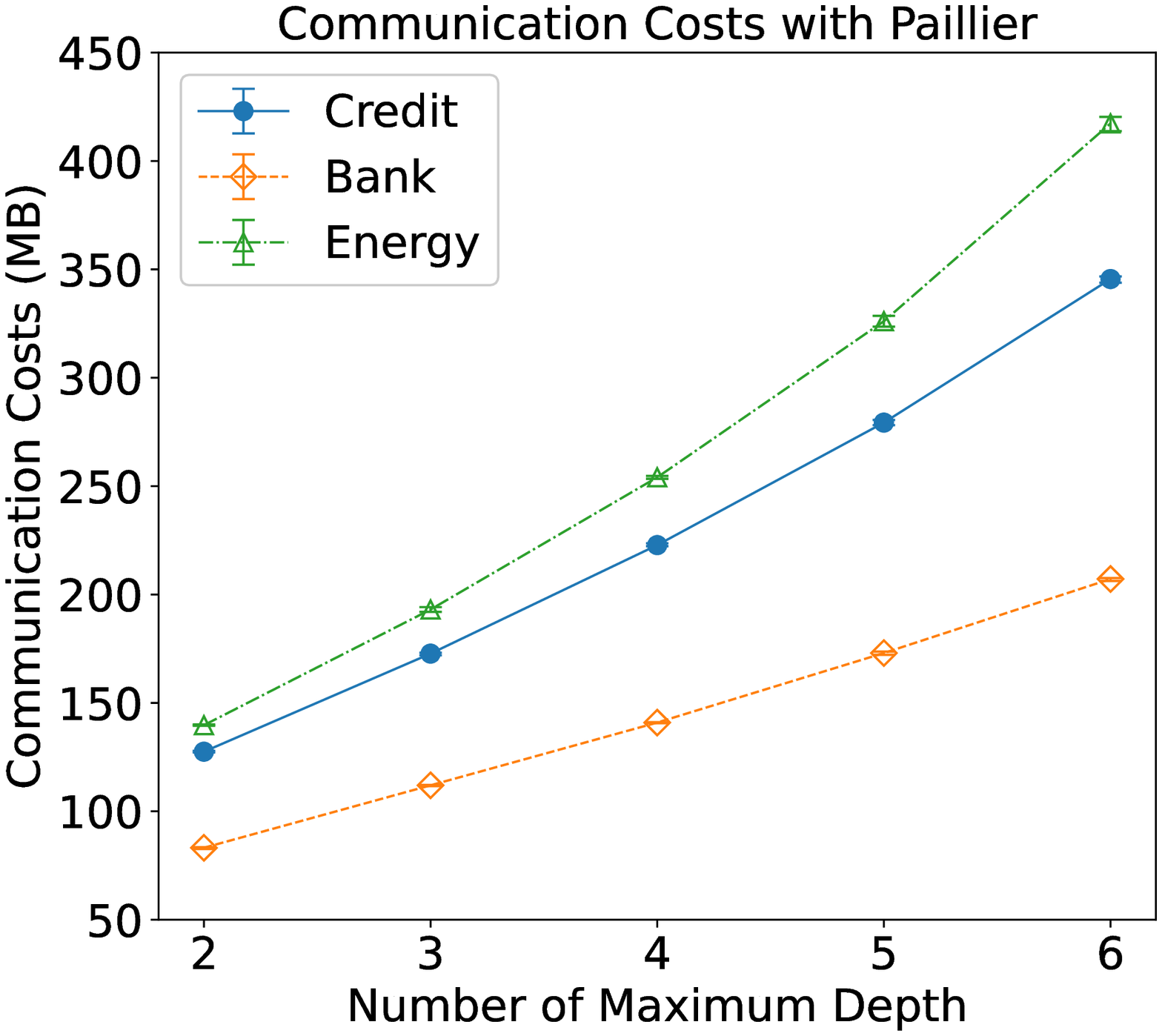}
\end{minipage}
}
\centering
\subfigure[Communication costs without encryption]{
\begin{minipage}[b]{0.46\textwidth}
\includegraphics[width=1\textwidth]{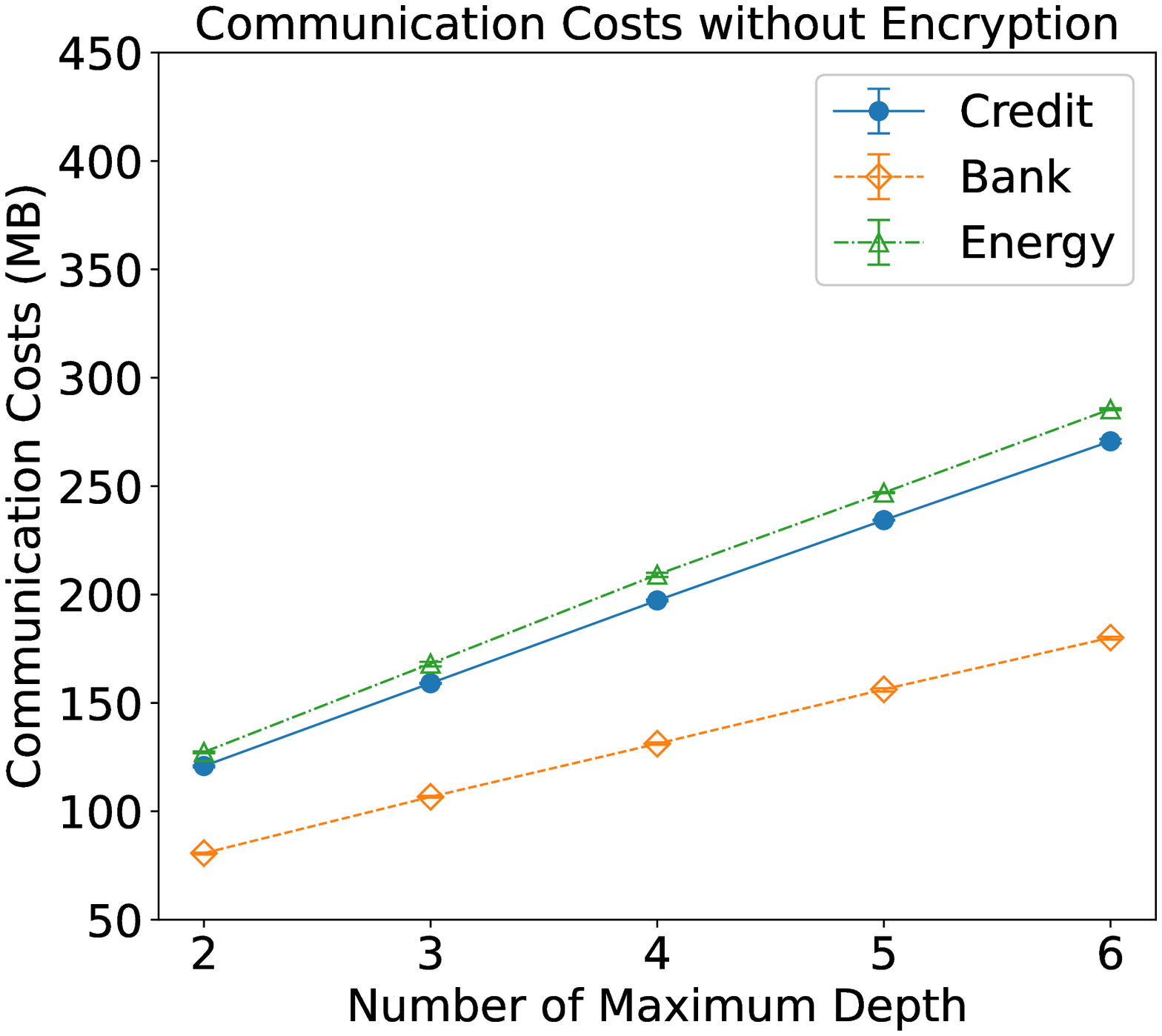}
\end{minipage}
} \\
\caption{The total communication costs over different maximum depths of a boosting tree, where (a) is the communication costs with Paillier encryption and (b) is the communication costs without encryption}
\label{fig:costdepth}
\end{minipage}
\end{figure}

The communication costs related to the number of boosting trees are shown in Fig .\ref{fig:costtree}. Similar to the previous results, the communication costs increase linearly with the number of trees, since the model size of XGBoost is proportional to the number of trees. And it is clear to see that the impact of encryption on communication costs is not as significant as that on the training time. The encryption brings a maximum of 25MB extra communication costs for the model with six boosting trees on the Appliance energy prediction dataset.

\begin{figure}[!t]
\begin{minipage}[t]{1\linewidth}
\centering
\subfigure[Communication costs with Paillier encryption]{
\begin{minipage}[b]{0.46\textwidth}
\includegraphics[width=1\textwidth]{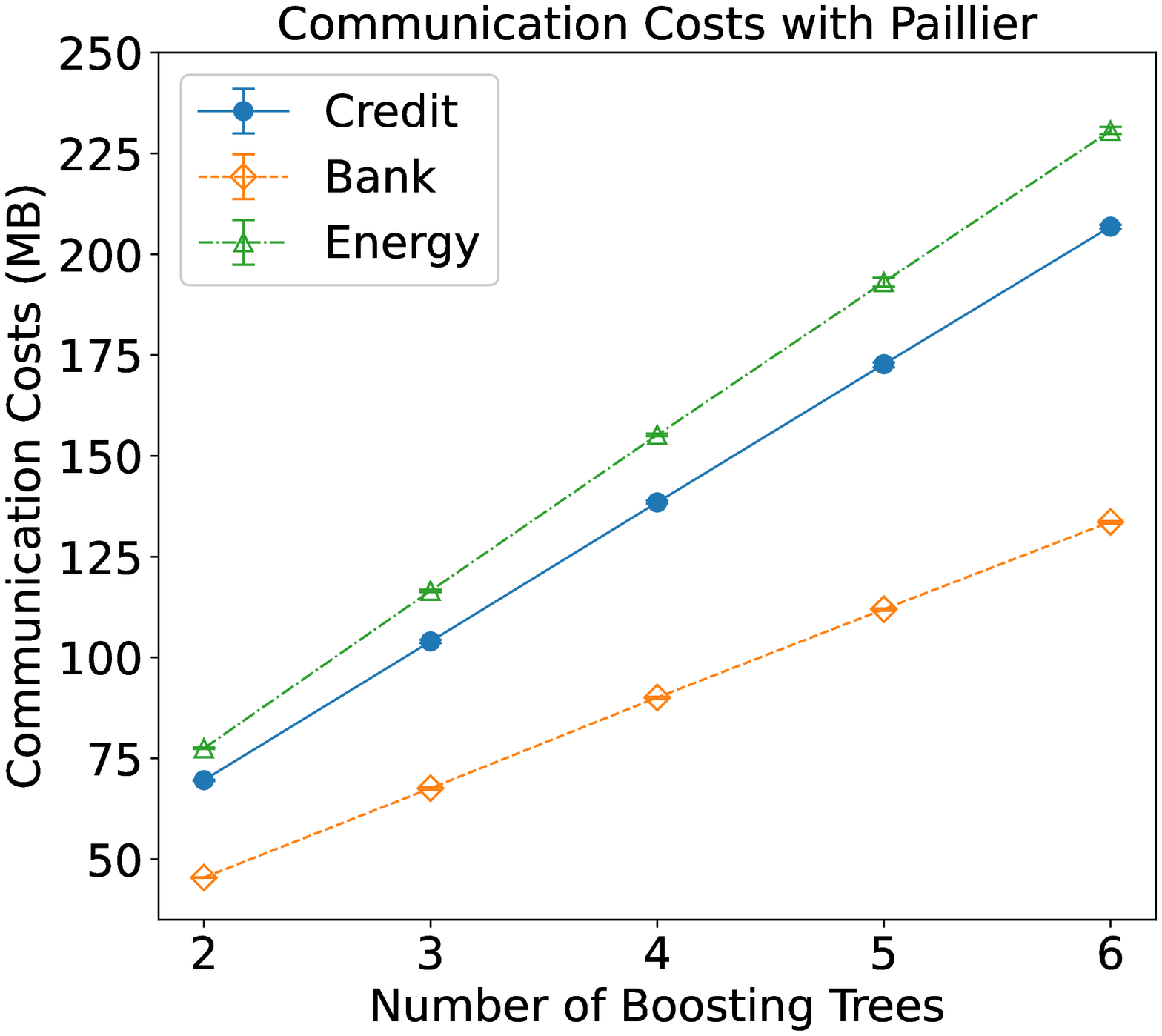}
\end{minipage}
}
\centering
\subfigure[Communication costs without encryption]{
\begin{minipage}[b]{0.46\textwidth}
\includegraphics[width=1\textwidth]{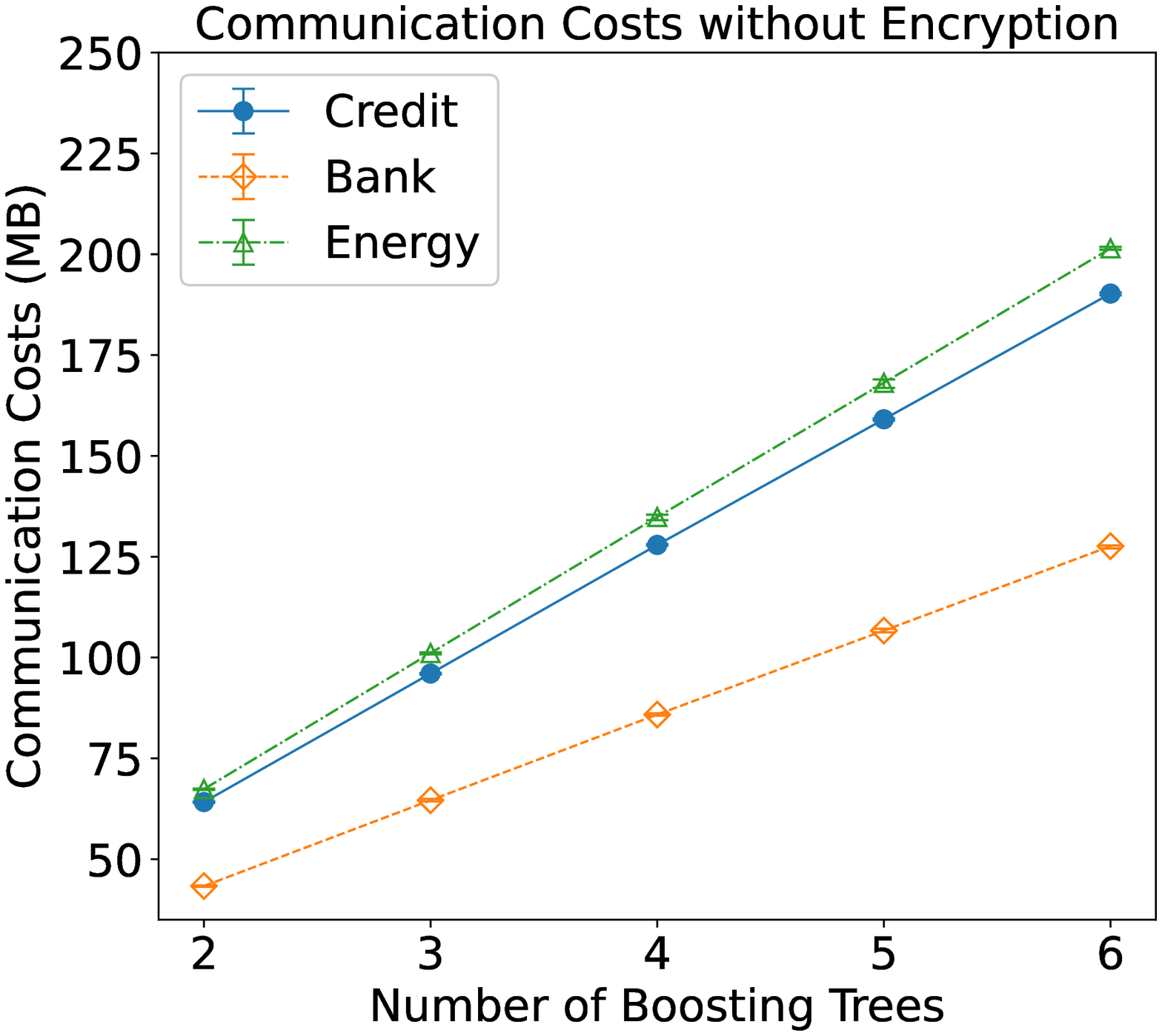}
\end{minipage}
} \\
\caption{The total communication costs over different numbers of boosting trees, where (a) is the communication costs with Paillier encryption and (b) is the communication costs without encryption.}
\label{fig:costtree}
\end{minipage}
\end{figure}

\subsection{Inference of predicted labels}
In this section, we assume that each client is curious but honest and attempts to deduce the true label information of other clients through the available leaf weights. The attack strategy is, for instance, to ensemble all the leaf values and guess a label if one client knows several leaf values of one specific data sample. By comparing the 'guessed' labels with the true labels from other clients, it is easy to calculate the guess accuracy of each client. Note that for a binary classification problem, the data labels are either 0 or 1 and the probability of a correct random guess is 50\%. Therefore, if the guess accuracy is equal to or less than 50\%, the label privacy is not revealed.

\begin{table}[]
\caption{Guess accuracy on different datasets}
\begin{tabular}{|p{26pt}|p{24pt}|p{24pt}|p{24pt}|p{24pt}|p{24pt}|p{24pt}|}
\hline
 \begin{tabular}[c]{@{}l@{}}Dataset\end{tabular} & \begin{tabular}[c]{@{}l@{}}Guess \\ accuracy,\\  no DP\end{tabular} & \begin{tabular}[c]{@{}l@{}}Guess \\ accuracy, \\ $\epsilon=2$\end{tabular} & \begin{tabular}[c]{@{}l@{}}Guess \\ accuracy, \\ $\epsilon=4$\end{tabular} & \begin{tabular}[c]{@{}l@{}}Guess \\ accuracy, \\ $\epsilon=6$\end{tabular} & \begin{tabular}[c]{@{}l@{}}Guess \\ accuracy, \\ $\epsilon=8$\end{tabular} & \begin{tabular}[c]{@{}l@{}}Guess \\ accuracy, \\ $\epsilon=10$\end{tabular} \\ \hline
\begin{tabular}[c]{@{}l@{}}Credit \\ card\end{tabular}                                                                & 68.75\%                                                             & 30.02\%                                                                 & 23.02\%                                                                 & 34.15\%                                                                 & 27.78\%                                                                 & 40.93\%                                                                  \\ \hline
\begin{tabular}[c]{@{}l@{}}Bank \\ marketing\end{tabular}                                                             & 61.57\%                                                             & 39.57\%                                                                 & 38.85\%                                                                 & 25.85\%                                                                 & 39.73\%                                                                 & 14.45\%                                                                  \\ \hline
\end{tabular}

\label{guessacc}
\end{table}

For the sake of brevity, we calculate the expected guess accuracy of all participating clients with and without partial DP using different $\epsilon$ values.
As shown in Table \ref{guessacc}, the guess accuracy is more than 60\% for both two datasets without using partial DP, which is more accurate than a random guess.
After applying the proposed partial DP algorithm, however, the guess accuracies drop dramatically and are all less than 50\%. Specifically on the Credit card dataset, the guess accuracy decreases to around 27\% for $\epsilon=10$ and 45\% for $\epsilon=4$. On the Bank marketing dataset, the guess accuracy drops to 14.45$\%$ when $\epsilon=10$. Since the noise in the DP follows the Gaussian distribution, it is normal that a smaller $\epsilon$ values will give a higher guess accuracy. But as long as partial DP is used for label privacy protection, the guess accuracy is lower than a random guess, meaning that no label privacy is leaked from the received leaf outputs.

\section{Conclusion and future work}
In this paper, we propose a secure learning system, called PIVODL, for privately training XGBoost decision tree models in a vertical federated learning environment with labels distributed on multiple clients. A secure node split protocol and a privacy-preserving tree training algorithm are proposed by requiring the source and split clients to separately split the data samples and calculate the corresponding impurity scores, effectively defending against differential attacks that may be encountered during the boosting tree node splits. In addition, a partial DP mechanism is adopted to deal with label privacy so that no client is able to guess the correct data labels through the revealed leaf weights.

Our experimental results empirically indicate that the proposed PIVODL framework is able to securely construct ensemble trees with negligible performance degradation. We show the test performance of the resulting decision trees is relatively insensitive to different $\epsilon$ values of the introduced DP, since the proposed algorithm adds Gaussian noise only to the leaf weights that are sent to the source clients. Moreover, the partial DP can effectively prevent labels of the data from being revealed through the received leaf weights.

Although the proposed PIVODL system shows promising performance in VFL, improving training efficiency remains challenging. The encryption in PIVODL still takes a large proportion of both learning time and communication costs. Therefore, our future work will be dedicated to developing a light weighted and efficient secure XGBoost system for VFL with labels distributed on multiple devices.





\ifCLASSOPTIONcaptionsoff
  \newpage
\fi



%



{\footnotesize\bibliography{references}

\begin{thebibliography}{10}

\bibitem{mcmahan2017communication}
B.~McMahan, E.~Moore, D.~Ramage, S.~Hampson, and B.~A. y~Arcas,
  ``Communication-efficient learning of deep networks from decentralized
  data,'' in {\em Artificial Intelligence and Statistics}, pp.~1273--1282,
  PMLR, 2017.

\bibitem{10.1145/3298981}
Q.~Yang, Y.~Liu, T.~Chen, and Y.~Tong, ``Federated machine learning: Concept
  and applications,'' {\em ACM Trans. Intell. Syst. Technol.}, vol.~10, Jan.
  2019.

\bibitem{8241854}
L.~T. Phong, Y.~Aono, T.~Hayashi, L.~Wang, and S.~Moriai, ``Privacy-preserving
  deep learning via additively homomorphic encryption,'' {\em IEEE Transactions
  on Information Forensics and Security}, vol.~13, no.~5, pp.~1333--1345, 2018.

\bibitem{10.1145/3338501.3357370}
S.~Truex, N.~Baracaldo, A.~Anwar, T.~Steinke, H.~Ludwig, R.~Zhang, and Y.~Zhou,
  ``A hybrid approach to privacy-preserving federated learning,'' in {\em
  Proceedings of the 12th ACM Workshop on Artificial Intelligence and
  Security}, AISec'19, (New York, NY, USA), p.~1–11, Association for
  Computing Machinery, 2019.

\bibitem{10.1145/3338501.3357371}
R.~Xu, N.~Baracaldo, Y.~Zhou, A.~Anwar, and H.~Ludwig, ``Hybridalpha: An
  efficient approach for privacy-preserving federated learning,'' in {\em
  Proceedings of the 12th ACM Workshop on Artificial Intelligence and
  Security}, AISec'19, (New York, NY, USA), p.~13–23, Association for
  Computing Machinery, 2019.

\bibitem{254465}
C.~Zhang, S.~Li, J.~Xia, W.~Wang, F.~Yan, and Y.~Liu, ``Batchcrypt: Efficient
  homomorphic encryption for cross-silo federated learning,'' in {\em 2020
  {USENIX} Annual Technical Conference ({USENIX} {ATC} 20)}, pp.~493--506,
  {USENIX} Association, July 2020.

\bibitem{zhu2020distributed}
H.~Zhu, R.~Wang, Y.~Jin, K.~Liang, and J.~Ning, ``Distributed additive
  encryption and quantization for privacy preserving federated deep learning,''
  {\em arXiv preprint arXiv:2011.12623}, 2020.

\bibitem{10.1145/2810103.2813687}
R.~Shokri and V.~Shmatikov, ``Privacy-preserving deep learning,'' in {\em
  Proceedings of the 22nd ACM SIGSAC Conference on Computer and Communications
  Security}, CCS '15, (New York, NY, USA), p.~1310–1321, Association for
  Computing Machinery, 2015.

\bibitem{orekondy18gradient}
T.~Orekondy, S.~J. Oh, Y.~Zhang, B.~Schiele, and M.~Fritz, ``Gradient-leaks:
  Understanding and controlling deanonymization in federated learning,'' in
  {\em NeurIPS Workshop on Federated Learning for Data Privacy and
  Confidentiality}, 2019.

\bibitem{NEURIPS2020_c4ede56b}
J.~Geiping, H.~Bauermeister, H.~Dr\"{o}ge, and M.~Moeller, ``Inverting
  gradients - how easy is it to break privacy in federated learning?,'' in {\em
  Advances in Neural Information Processing Systems} (H.~Larochelle,
  M.~Ranzato, R.~Hadsell, M.~F. Balcan, and H.~Lin, eds.), vol.~33,
  pp.~16937--16947, Curran Associates, Inc., 2020.

\bibitem{li2019end}
H.~Li and T.~Han, ``An end-to-end encrypted neural network for gradient updates
  transmission in federated learning,'' {\em arXiv preprint arXiv:1908.08340},
  2019.

\bibitem{goldreich1998secure}
O.~Goldreich, ``Secure multi-party computation,'' {\em Manuscript. Preliminary
  version}, vol.~78, 1998.

\bibitem{10.1145/3133956.3133982}
K.~Bonawitz, V.~Ivanov, B.~Kreuter, A.~Marcedone, H.~B. McMahan, S.~Patel,
  D.~Ramage, A.~Segal, and K.~Seth, ``Practical secure aggregation for
  privacy-preserving machine learning,'' in {\em Proceedings of the 2017 ACM
  SIGSAC Conference on Computer and Communications Security}, CCS '17, (New
  York, NY, USA), p.~1175–1191, Association for Computing Machinery, 2017.

\bibitem{gentry2009fully}
C.~Gentry {\em et~al.}, {\em A fully homomorphic encryption scheme}, vol.~20.
\newblock Stanford university Stanford, 2009.

\bibitem{10.1007/978-3-540-79228-4_1}
C.~Dwork, ``Differential privacy: A survey of results,'' in {\em Theory and
  Applications of Models of Computation} (M.~Agrawal, D.~Du, Z.~Duan, and
  A.~Li, eds.), (Berlin, Heidelberg), pp.~1--19, Springer Berlin Heidelberg,
  2008.

\bibitem{hardy2017private}
S.~Hardy, W.~Henecka, H.~Ivey-Law, R.~Nock, G.~Patrini, G.~Smith, and
  B.~Thorne, ``Private federated learning on vertically partitioned data via
  entity resolution and additively homomorphic encryption,'' {\em arXiv
  preprint arXiv:1711.10677}, 2017.

\bibitem{nock2018entity}
R.~Nock, S.~Hardy, W.~Henecka, H.~Ivey-Law, G.~Patrini, G.~Smith, and
  B.~Thorne, ``Entity resolution and federated learning get a federated
  resolution,'' {\em arXiv preprint arXiv:1803.04035}, 2018.

\bibitem{liu2020asymmetrically}
Y.~Liu, X.~Zhang, and L.~Wang, ``Asymmetrically vertical federated learning,''
  {\em arXiv preprint arXiv:2004.07427}, 2020.

\bibitem{yang2019parallel}
S.~Yang, B.~Ren, X.~Zhou, and L.~Liu, ``Parallel distributed logistic
  regression for vertical federated learning without third-party coordinator,''
  {\em arXiv preprint arXiv:1911.09824}, 2019.

\bibitem{cheng2021secureboost}
K.~Cheng, T.~Fan, Y.~Jin, Y.~Liu, T.~Chen, D.~Papadopoulos, and Q.~Yang,
  ``Secureboost: A lossless federated learning framework,'' 2021.

\bibitem{10.1145/2939672.2939785}
T.~Chen and C.~Guestrin, ``Xgboost: A scalable tree boosting system,'' in {\em
  Proceedings of the 22nd ACM SIGKDD International Conference on Knowledge
  Discovery and Data Mining}, KDD '16, (New York, NY, USA), p.~785–794,
  Association for Computing Machinery, 2016.

\bibitem{10.14778/3407790.3407811}
Y.~Wu, S.~Cai, X.~Xiao, G.~Chen, and B.~C. Ooi, ``Privacy preserving vertical
  federated learning for tree-based models,'' {\em Proc. VLDB Endow.}, vol.~13,
  p.~2090–2103, July 2020.

\bibitem{tian2020federboost}
Z.~Tian, R.~Zhang, X.~Hou, J.~Liu, and K.~Ren, ``Federboost: Private federated
  learning for gbdt,'' 2020.

\bibitem{yang2019federated}
Q.~Yang, Y.~Liu, T.~Chen, and Y.~Tong, ``Federated machine learning: Concept
  and applications,'' {\em ACM Transactions on Intelligent Systems and
  Technology (TIST)}, vol.~10, no.~2, pp.~1--19, 2019.

\bibitem{liu2019communication}
Y.~Liu, Y.~Kang, L.~Li, X.~Zhang, Y.~Cheng, T.~Chen, M.~Hong, and Q.~Yang, ``A
  communication efficient vertical federated learning framework,'' {\em Unknown
  Journal}, 2019.

\bibitem{isard2007dryad}
M.~Isard, M.~Budiu, Y.~Yu, A.~Birrell, and D.~Fetterly, ``Dryad: distributed
  data-parallel programs from sequential building blocks,'' in {\em Proceedings
  of the 2nd ACM SIGOPS/EuroSys European Conference on Computer Systems 2007},
  pp.~59--72, 2007.

\bibitem{isard2009distributed}
M.~Isard and Y.~Yu, ``Distributed data-parallel computing using a high-level
  programming language,'' in {\em Proceedings of the 2009 ACM SIGMOD
  International Conference on Management of data}, pp.~987--994, 2009.

\bibitem{fetterly2009dryadlinq}
Y.~Y. M. I.~D. Fetterly, M.~Budiu, {\'U}.~Erlingsson, and P.~K. G.~J. Currey,
  ``Dryadlinq: A system for general-purpose distributed data-parallel computing
  using a high-level language,'' {\em Proc. LSDS-IR}, vol.~8, 2009.

\bibitem{hosmer2013applied}
D.~W. Hosmer~Jr, S.~Lemeshow, and R.~X. Sturdivant, {\em Applied logistic
  regression}, vol.~398.
\newblock John Wiley \& Sons, 2013.

\bibitem{goodfellow2016deep}
I.~Goodfellow, Y.~Bengio, A.~Courville, and Y.~Bengio, {\em Deep learning},
  vol.~1.
\newblock MIT press Cambridge, 2016.

\bibitem{lecun2015deep}
Y.~LeCun, Y.~Bengio, and G.~Hinton, ``Deep learning,'' {\em nature}, vol.~521,
  no.~7553, pp.~436--444, 2015.

\bibitem{ElGamal1985public}
T.~ElGamal, ``A public key cryptosystem and a signature scheme based on
  discrete logarithms,'' {\em IEEE Transactions on Information Theory},
  vol.~31, no.~4, pp.~469--472, 1985.

\bibitem{Paillier1999public}
P.~Paillier, ``Public-key cryptosystems based on composite degree residuosity
  classes,'' in {\em TAMC}, Springer, 1999.

\bibitem{dwork2006calibrating}
C.~Dwork, F.~McSherry, K.~Nissim, and A.~Smith, ``Calibrating noise to
  sensitivity in private data analysis,'' in {\em Theory of cryptography
  conference}, pp.~265--284, Springer, 2006.

\bibitem{dwork2006our}
C.~Dwork, K.~Kenthapadi, F.~McSherry, I.~Mironov, and M.~Naor, ``Our data,
  ourselves: Privacy via distributed noise generation,'' in {\em Annual
  International Conference on the Theory and Applications of Cryptographic
  Techniques}, pp.~486--503, Springer, 2006.

\bibitem{abadi2016deep}
M.~Abadi, A.~Chu, I.~Goodfellow, H.~B. McMahan, I.~Mironov, K.~Talwar, and
  L.~Zhang, ``Deep learning with differential privacy,'' in {\em Proceedings of
  the 2016 ACM SIGSAC conference on computer and communications security},
  pp.~308--318, 2016.

\bibitem{9069945}
K.~Wei, J.~Li, M.~Ding, C.~Ma, H.~H. Yang, F.~Farokhi, S.~Jin, T.~Q.~S. Quek,
  and H.~V. Poor, ``Federated learning with differential privacy: Algorithms
  and performance analysis,'' {\em IEEE Transactions on Information Forensics
  and Security}, vol.~15, pp.~3454--3469, 2020.

\bibitem{geyer2017differentially}
R.~C. Geyer, T.~Klein, and M.~Nabi, ``Differentially private federated
  learning: A client level perspective,'' {\em arXiv preprint
  arXiv:1712.07557}, 2017.

\bibitem{kolesnikov2017practical}
V.~Kolesnikov, N.~Matania, B.~Pinkas, M.~Rosulek, and N.~Trieu, ``Practical
  multi-party private set intersection from symmetric-key techniques,'' in {\em
  Proceedings of the 2017 ACM SIGSAC Conference on Computer and Communications
  Security}, pp.~1257--1272, 2017.

\bibitem{pinkas2014faster}
B.~Pinkas, T.~Schneider, and M.~Zohner, ``Faster private set intersection based
  on $\{$OT$\}$ extension,'' in {\em 23rd $\{$USENIX$\}$ Security Symposium
  ($\{$USENIX$\}$ Security 14)}, pp.~797--812, 2014.

\bibitem{yeh2009comparisons}
I.-C. Yeh and C.-h. Lien, ``The comparisons of data mining techniques for the
  predictive accuracy of probability of default of credit card clients,'' {\em
  Expert Systems with Applications}, vol.~36, no.~2, pp.~2473--2480, 2009.

\bibitem{moro2014data}
S.~Moro, P.~Cortez, and P.~Rita, ``A data-driven approach to predict the
  success of bank telemarketing,'' {\em Decision Support Systems}, vol.~62,
  pp.~22--31, 2014.

\bibitem{candanedo2017data}
L.~M. Candanedo, V.~Feldheim, and D.~Deramaix, ``Data driven prediction models
  of energy use of appliances in a low-energy house,'' {\em Energy and
  buildings}, vol.~140, pp.~81--97, 2017.

\bibitem{zhu2021federated}
H.~Zhu, J.~Xu, S.~Liu, and Y.~Jin, ``Federated learning on non-iid data: A
  survey,'' {\em arXiv preprint arXiv:2106.06843}, 2021.

\bibitem{PythonPaillier}
C.~Data61, ``Python paillier library.''
  \url{https://github.com/data61/python-paillier}, 2013.

\end{thebibliography}
\bibliographystyle{ieeetr}}

%








\end{document}